\def\BibTeX{{\rm B\kern-.05em{\sc i\kern-.025em b}\kern-.08em
		T\kern-.1667em\lower.7ex\hbox{E}\kern-.125emX}}
\newcommand{\overbow}[1]{
	\tikz [baseline = (N.base), every node/.style={}] {
			\node [inner sep = 0pt] (N) {$#1$};
			\draw [line width = 0.4pt] plot [smooth, tension=1.3] coordinates {
				($(N.north west) + (0.1ex,0)$)
				($(N.north) + (0,0.5ex)$)
				($(N.north east) + (0,0)$)
			};
	}
}
\newtheorem{theorem}{Theorem}
\newtheorem{lemma}{Lemma}
\newtheorem{corollary}{Corollary}
\newtheorem{definition}{Definition}
\begin{document}


\title{Conic Formation in Presence of Faulty Robots
\thanks{Full version of the conference paper in ALGOSENSORS, 2020.}
}

\author{
Debasish Pattanayak\thanks{Department of Mathematics, IIT Guwahati, Guwahati, India} \thanks{Visit to University of Vienna is supported by the Overseas Visiting Doctoral Fellowship, 2018 Award No. ODF/2018/001055 by the Science and Engineering Research Board (SERB), Government of India.}
\and
Klaus-Tycho Foerster \thanks{Faculty of Computer Science, University of Vienna, Vienna, Austria} 
\and
Partha Sarathi Mandal \thanks{Department of Mathematics, IIT Guwahati, Guwahati, India} 
\and
Stefan Schmid \thanks{Faculty of Computer Science, University of Vienna, Vienna, Austria}
}
\date{}
\maketitle

\begin{abstract}
	Pattern formation is one of the most fundamental problems in distributed computing, which has recently received much attention.
	In this paper, we initiate the study of distributed pattern formation in situations when some robots can be  \textit{faulty}. In particular, we consider the 
	well-established \textit{look-compute-move} model with oblivious, anonymous robots.
	We first present lower bounds and show that any deterministic algorithm takes at least two rounds to form simple patterns in the presence of faulty robots.
	We then present distributed algorithms for our problem  which match this bound, \textit{for conic sections}:
	in at most two rounds, robots form lines, circles and parabola tolerating $f=2,3$ and $4$ faults, respectively.
	For $f=5$, the target patterns are parabola, hyperbola and ellipse.
	We show that the resulting pattern includes the $f$ faulty robots in the pattern of $n$ robots, where $n \geq 2f+1$, and that $f < n < 2f+1$ robots cannot form such patterns.
	We conclude by discussing several relaxations and extensions. 

\noindent\textbf{Keywords:} Distributed Algorithms, Pattern Formation, Conic Formation, Fault Tolerance, 
Oblivious Mobile Robots
\end{abstract}

\section{Introduction}

Self-organizing systems have fascinated researchers for many decades already.
These systems are capable of forming an overall order from an initially disordered configuration, using \textit{local} interactions between its parts only.
Self-organization arises in many forms, including physical, chemical, and biological systems, and can be based on various processes, from crystallization, over chemical oscillation, to neural circuits~\cite{selforg}.
Due to their decentralized and self-healing properties, self-organizing systems are often very robust.

We, in this paper, consider self-organizing systems in the context of \textit{robotics}. In particular, we are interested in the fundamental question of how most simple robots can self-organize into basic patterns. 
This \textit{pattern formation} problem has already received much attention in the literature~\cite{FlocchiniPSW08,BramasT16,SuzukiY99,TomitaYKY17}.

A particularly well-studied and challenging model is 
the \textit{look-compute-move} model, in which each robot, in each round, first observes its environment and then decides on its next move. 
In the most basic setting, the robots do not have any persistent memory and hence cannot remember information from previous rounds, and they can also not distinguish between the other robots they see: the robots are \textit{oblivious} and \textit{anonymous}.
Furthermore, robots cannot communicate.
Over the last years, several research lines investigated
when robots can and cannot form different patterns~\cite{SuzukiY99,BramasT16,TomitaYKY17,YamauchiUKY17,FlocchiniPSW08,0001FSY15,FujinagaYOKY15,YamauchiY13}. 
However, existing work on pattern formation shares the assumption that robots are non-faulty.

This paper initiates the study of distributed pattern formation algorithms for scenarios where some robots can be \textit{faulty}: the faulty robots do not move nor act according to a specified protocol or algorithm.
The setting with faulty robots is particularly challenging, as the non-faulty robots cannot directly observe which robots are faulty.
However, even \textit{indirect} observations seem challenging: since all robots are oblivious, a robot cannot remember patterns from previous rounds, and hence, has no information which robots
moved recently. 
In fact, a robot \textit{per se} does not even know whether the current pattern it observes is the initial configuration or whether some rounds of movements already occurred.
What's more, the ability to self-organize into a specific pattern seems to require some coordination or even consensus, which is notoriously hard in faulty~environments.



\noindent\textbf{Contributions.}
This paper considers the design of distributed algorithms for pattern formation of most simple oblivious and potentially \textit{faulty} robots. 
Our main result is an algorithmic framework that allows robots to form patterns which include faulty robots, in a decentralized and efficient manner. 
In particular, we do not require robots to identify faulty robots explicitly or to remember previous configurations, but require knowledge of the exact number of faults.

For $f$ faults, we show how to 
form conic patterns in just \emph{two rounds}, for at least $2f+1$ robots.
We form conic patterns such as line, circle and parabola for $f=2,3$ and $4$, respectively. For $f=5$, the target pattern are parabola, hyperbola and ellipse.
We also prove that this is optimal: no deterministic algorithm can solve this problem in just one round or with less than $2f+1$ robots.
We further discuss several relaxations of our model and extensions of our results, 
e.g., considering initial symmetric configurations, having at most $f$ faulty robots, or the impossibility of forming the pattern corresponding to $f$~faults. 
We also discuss an extension where the robots form a line (a circle) for $f=3,4,5$ ($f=4,5$).

\noindent\textbf{Organization.}
After discussing related work in \S\ref{subsec:rel},
we first provide a formal model in \S\ref{sec:prelim}, followed by 
a study of the special case of $f=1$ faulty robot in \S\ref{sec:example} to provide some intuition.
We then give tight runtime and cardinality lower bounds in \S\ref{sec:lowerbound}, and match them for the remaining conic patterns in \S\ref{sec:algos}.
In \S\ref{sec:correctness}, we show the algorithmic framework and prove the correctness of our algorithm.
After discussing further model variations in \S\ref{sec:discussion}, we conclude in \S\ref{sec:conclusion}.

\subsection{Related Work}\label{subsec:rel}

Pattern formation is an active area of research~\cite{FujinagaYOKY15,YamauchiY13,0001FSY15},
however, to the best of our knowledge, we are the first to consider pattern formation in the presence of faults: a fundamental extension.
In general, pattern formation allows for exploring the limits of what oblivious robots can compute.
A ``weak robot'' model was introduced by Flocchini et al.~\cite{FlocchiniPSW08}, where the objective is to determine the minimum capabilities a set of robots need to achieve certain tasks. In general, the tasks include \textit{Gathering}~\cite{ChaudhuriM15,2012Flocchini,CieliebakFPS12}, \textit{Convergence}~\cite{Cohen2005,CohenP04}, \textit{Pattern Formation}~\cite{SuzukiY99,FujinagaYOKY15,BramasT16}, etc.
Gathering is a special case of pattern formation, where the target pattern is a point.
Gathering has been achieved for robots with multiplicity detection~\cite{FlocchiniPSW05,CieliebakFPS12}.
Most gathering algorithms use the capability of multiplicity detection to form a unique multiplicity point starting from a scattered configuration. In the absence of this capability, it has been proved that gathering is impossible in the semi-synchronous model without any agreement on the coordinate system~\cite{Prencipe2007}.

The objective of gathering algorithms is only to gather the non-faulty robots, not to form general patterns. Moreover, for the specific case of gathering, some interesting first fault-tolerance studies exist. 
Agmon and Peleg~\cite{AgmonP06} solve the gathering problem for a single crash-fault.
Gathering has been solved with multiple faults~\cite{Bouzid0T13,BramasT15} with strong multiplicity detection. Next, gathering has also been addressed for robots with weak multiplicity detection tolerating multiple crash faults~\cite{BhagatM17,PattanayakMRM19}.
For byzantine faults, Auger et al.~\cite{AugerBCTU13} show impossibility results, and Defago et al.~\cite{DefagoP0MPP16} present a self-stabilizing algorithm for gathering.
The gathering algorithms only gather non-faulty robots.
Since oblivious robots cannot differentiate a faulty robot from a non-faulty one, all the algorithms can be considered to be non-terminating algorithms. In contrast in
our paper, we include the faulty robots in the pattern, and as a result, we achieve termination.

Flocchini et al.~\cite{FlocchiniPSW08} characterize the role of common knowledge, like agreement on the coordinate system as a requisite for the pattern formation problem, and 
Yamashita and Suzuki~\cite{YamashitaS10} characterize the patterns formable by oblivious robots.
Fujinaga et al.~\cite{FujinagaYOKY15} present an algorithm using bipartite matching for asynchronous robots.
Yamauchi and Yamashita~\cite{YamauchiY13} propose a pattern formation algorithm for robots with limited visibility.
Das et al.~\cite{0001FSY15} characterize the sequence of patterns that are formable, starting from an arbitrary configuration. 
Das et al.~\cite{DBLP:journals/access/DasFPS20} further extend the sequence of pattern formation problem for luminous robots (robots with visible external persistent memory).
As a special pattern formation problem, uniform circle formation has also been considered in the literature~\cite{FlocchiniPSV17}.
Formation of a plane starting from a three-dimensional configuration has also been solved for synchronous robots~\cite{YamauchiUKY17,TomitaYKY17}. The authors characterize configurations for which plane formation is possible.
The existing pattern formation algorithms consider the sequential movement of robots.
Since we consider faulty robots in our paper, all the existing algorithms are not adaptable to our cause. 
A fault-tolerant algorithm has to consider the simultaneous movement of robots and should satisfy the wait-free property to avoid cyclic dependency.



\section{Preliminaries}\label{sec:prelim}
	
	We follow standard model assumptions, inspired by existing work, e.g.,~\cite{AgmonP06,BhagatM17,PattanayakMRM19}.
	\subsection{Model}\label{subsec:model}
    Each robot is a dimensionless point robot.
    The robots are homogeneous: they execute the same deterministic algorithm, are indistinguishable and anonymous (no identifiers), oblivious (no persistent memories), and silent (no communication). 
    The robots do not share a common coordinate system and observe others in their own \textit{local coordinate system}.
    The robots have unlimited visibility, and the determined locations of other robots are precise.
    
    Each robot follows the \textit{look-compute-move} cycle.
    A robot obtains a snapshot of relative positions of other robots with respect to its position in the \textit{look} state. Based on the snapshot of other robot positions, it decides a destination in the \textit{compute} state. In the \textit{move} state, it moves to the destination and reaches the destination in the same round. This is known as \textit{rigid} 
    robot movement.
    The scheduler, which activates the robots, follows a fully-synchronous (\textit{FSYNC}) model, i.e., all the robots look at the same time and finish movement in the same round, i.e., each completion of look-compute-move cycle is one round.
    We consider that the robots are susceptible to crash-faults, i.e., they stop moving after the crash and never recover. Moreover, the number of $f$ faulty robots is known beforehand, and as such, the robots know which types of pattern to form. In particular, we assume that the following four initial conditions hold:
    \begin{enumerate}
        \item All $f$ faulty robots have already crashed initially.
        \item All initial configurations are asymmetric.\footnote{This assumption allows us to have a unique ordering of the robots~\cite{ChaudhuriM15}.} 
        \item All robots occupy distinct positions initially.\label{ass:distinctinitial}\footnote{As any set of non-faulty robots that share a position will always perform the same actions from then on and be indistinguishable from each other.} 
        \item The faulty robots form a convex polygon.
    \end{enumerate}
    The last assumption needs the faulty robots to be at corners of a convex polygon; the non-faulty robots can lie at any position. The rationale behind the assumption is that four robots forming a triangle with a robot inside the triangle do not correspond to any conic section in $\mathbb{R}^2$, similarly, for five robots. For three or more robots, a collinear configuration is addressed in \S\ref{sec:discussion}. For two robots, the assumption trivially holds.
	\subsection{Notations}
	
		A configuration $\mathcal{C} = \{p_1, p_2, \cdots, p_n\}$ is a set of $n$ points on the plane $\mathbb{R}^2$, where $p_i = (x_i, y_i)$ is a tuple representing the $x$- and $y$-coordinates of the robots. Since each robot is initially located at distinct points, it then holds that $p_i \neq p_j$ for any pair of $i$ and $j$ such that $i \neq j$.
		$f$ is the number of faulty robots.
		We will always uphold this condition in our algorithms, except for the case of $f=1$, where the target pattern is a point. 
		The target patterns are conic sections that satisfy the second degree general equation 
		$$a_1x^2 + a_2y^2 + a_3xy + a_4x + a_5y + a_6 = 0~.$$
		Depending on the values of $a_i$ for $i\in \{1,2,\ldots,6\}$, the equation represents line, circle, parabola, hyperbola or ellipse. We say a set of points form a conic pattern when they lie on the same conic section. 
		Now, we say the conic pattern passes through the set of points.
		We denote $\mathcal{P}$ as the length of the pattern and $u$ as the uniform distance.
	\section{Problem Statement and Intuition}\label{sec:example}
		\noindent\textbf{Objective.} Given a set of robots on the plane as defined in the model~(\S\ref{subsec:model}), we want the robots to form a conic pattern corresponding to the number of faults.\footnote{That is a point for $f=1$, a line for $f=2$, a circle for $f=3$, a parabola for $f=4$, and an ellipse or parabola or hyperbola for $f=5$.}
		
		\smallskip
		
		\noindent\textbf{Point Formation ($f=1$).} 
		To provide some intuition, we start with the case of $f=1$.
		For a single faulty robot, we move all the robots to the center of their smallest enclosing circle. If there is a faulty robot in the center, then point formation is achieved. If the faulty robot is somewhere else, we arrive at a configuration with two robot positions.
		For a configuration with two robot positions, all robots move to the other robot's position. 
		From Assumption~\ref{ass:distinctinitial}, we have all the robots at distinct initial positions. Hence the faulty robot is at a different position from the gathered robots. Moving all gathered robots to the faulty robot's position achieves our objective of point formation, as the faulty robot cannot~move.
		
	\section{Lower Bound}\label{sec:lowerbound}
		
	We saw above that there could be situations where only one round suffices, namely, for the case of exactly $n=2$ and $f=1$.
	However, we can show that for $f\geq 2$, at least two rounds are required.
	For conic patterns (with $f\in \{2,3,4,5\}$), this bound is tight:
	we will later provide algorithms that terminate in two rounds.

	\begin{theorem}
			For every $f\geq 2$ and every $n\geq f+3$ holds: Any deterministic algorithm needs more than one round to make a pattern passing through all $f$ faulty robots.
	\end{theorem}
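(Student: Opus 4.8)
The plan is to argue by contradiction: suppose a deterministic one-round algorithm $\mathcal{A}$ exists. Since the robots are oblivious, anonymous, and lack a common coordinate system, the destination computed by any robot in the single round depends only on the (unlabeled, scale/rotation-free) snapshot it sees of the initial configuration $\mathcal{C}$. The key observation is that, by Assumption~\ref{ass:distinctinitial} and the asymmetry assumption, the initial configuration $\mathcal{C}$ looks exactly the same whether a given robot is faulty or not — a robot cannot tell from its own snapshot whether it itself is faulty, nor which of the others are faulty. So the algorithm must prescribe, for each of the $n$ robots, a target point as a function of $\mathcal{C}$ only, and the final configuration is: the faulty robots stay put (at their positions in $\mathcal{C}$), while the non-faulty ones jump to their prescribed targets. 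The adversary gets to choose which $f$-subset is faulty \emph{after} seeing $\mathcal{A}$.

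**Key steps.** First I would fix a generic asymmetric initial configuration $\mathcal{C}$ of $n \geq f+3$ points in convex position (so that the "faulty robots form a convex polygon" constraint can always be met for the adversary's chosen $f$-subset — choosing all of $\mathcal{C}$ convex makes every subset convex). Second, I would observe that correctness of $\mathcal{A}$ must hold for \emph{every} choice of faulty $f$-subset $F \subseteq \mathcal{C}$: for each such $F$, the resulting final configuration — the $f$ frozen points of $F$ together with the images of the $n-f$ non-faulty robots — must lie on a single conic. Third, and this is the crux, I would derive a contradiction by comparing two (or a few) different choices of $F$. Since $n \geq f+3$, we can pick two faulty sets $F_1, F_2$ that overlap in $f-1$ points but differ in one point, say $F_1 = S \cup \{a\}$ and $F_2 = S \cup \{b\}$ with $a \neq b$ and $a,b,S$ all initial points. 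The robots in $S$ are faulty in both scenarios, so they sit on the conic in both cases. A robot like $a$ is faulty in scenario 1 (so it lies at its initial position on conic $Q_1$) but non-faulty in scenario 2 (so it moves to its target $\mathcal A(a,\mathcal C)$, which must lie on $Q_2$). Now I count degrees of freedom: a conic in $\mathbb{R}^2$ is determined by $5$ points in general position, and with $f+3 \geq 5$ initial points placed in general position, the $f$ frozen points of $F$ already over-determine — or exactly determine — the conic, leaving no freedom for where the moving robots land. By choosing $\mathcal C$ so that the $f$ points of $F_1$ lie on a \emph{unique} conic $Q_1$ not passing through $b$ (possible since $f \geq 2$ gives at least $5$ points total and generic position forces uniqueness), the non-faulty robot $b$ in scenario 1 is forced onto $Q_1$; symmetrically in scenario 2 robot $a$ is forced onto $Q_2 \neq Q_1$. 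Playing these constraints against each other over enough choices of $F$ pins every robot's target to lie simultaneously on several distinct conics, forcing it into the finite intersection set, which cannot accommodate all $n-f$ movers while keeping the configuration non-degenerate — contradiction.

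**Main obstacle.** The delicate part is the bookkeeping that makes the degree-of-freedom argument airtight for the full range $f \in \{2,3,4,5\}$ simultaneously, and in particular handling the smallest case $n = f+3$ where the number of frozen points ($f$) is less than $5$ for $f \le 4$, so a single scenario does \emph{not} determine the conic; there the contradiction must come from combining \emph{all} $\binom{n}{f}$ scenarios rather than just two. I expect the cleanest route is to exhibit one explicit "bad" configuration $\mathcal C$ (e.g.\ $n$ points on a carefully chosen algebraic curve, or a small perturbation of a regular polygon) and show by a short algebraic argument that the system of "for every $F$, the output lies on a conic" constraints on the target map $\mathcal A(\cdot,\mathcal C)$ is infeasible — equivalently, that two conic constraints sharing $\geq 4$ common points but not all $5$ have intersection of size $\leq 4 < n-f$ for the relevant parameters, so some mover cannot be placed. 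The remaining routine check is that such a configuration genuinely satisfies all four initial-condition assumptions (asymmetric, distinct, faulty subset convex), which the convex-position choice handles uniformly.
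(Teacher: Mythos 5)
Your setup is the right one---obliviousness means the one-round destinations are a function of the initial configuration $\mathcal{C}$ alone, the adversary picks the faulty $f$-subset afterwards, and one compares two faulty sets $F_1=S\cup\{a\}$, $F_2=S\cup\{b\}$ sharing $f-1$ points---but the argument as you lay it out does not close, and you say so yourself. The degrees-of-freedom route (``the $f$ frozen points already determine the conic'') only has a chance for $f\geq 5$, and your fallback for $f\leq 4$, $n=f+3$ (combining all $\binom{n}{f}$ scenarios, exhibiting an explicit algebraic configuration, the inequality ``$\leq 4 < n-f$'' which is false for $n=f+3$ where $n-f=3$) is a hope rather than a proof. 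So there is a genuine gap: the finishing step is missing.

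The missing idea, which is what the paper uses, needs no conic-interpolation bookkeeping at all: instead of trying to pin down each final pattern from its frozen points, count the points \emph{common to the two final patterns}. With $n\geq f+3$ there are at least two robots that are non-faulty in both scenarios and distinct from $a$ and $b$ (the paper's $p_{f+2},p_{f+3}$). Since the algorithm is deterministic and sees the identical snapshot $\mathcal{C}$ in both scenarios, these robots move to the \emph{same} destinations in both runs, and those destinations lie on both final patterns $\wp'$ and $\wp''$. Add the $f-1$ frozen robots of $S$, which also lie on both patterns, and the two patterns share at least $f+1$ points. Choosing $\mathcal{C}$ so that no $f+1$ initial positions lie on a common pattern forces $\wp'\neq\wp''$ (equality would put $p_1,\dots,p_{f+1}$ on one pattern), while two distinct patterns of the relevant class intersect in at most $f$ points (at most $1$ for lines, $2$ for circles, $4$ for parabolas and general conics). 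That is the contradiction, uniformly for $f\in\{2,3,4,5\}$ and all $n\geq f+3$; your two-scenario choice was correct, but the contradiction comes from the shared destinations of the always-non-faulty robots, not from uniqueness of the conic through the frozen set.
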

	\begin{proof}
		Let $\varphi$ be a deterministic algorithm that forms the pattern with faulty robots. Suppose $\varphi$ solves the pattern in one step. 
		Two patterns can have at most $f$ common points\footnote{Two parabolas intersect at 4 points, which can be the common points between two parabola patterns.}. 
		Let $\mathcal{C} = \{p_1, p_2,\cdots, p_{f+3}\}$ be an initial configuration with $f+3$ robots such that no $f+1$ robots are in the same pattern.

		Without loss of generality, consider two sets of $f$ faulty robots at positions $\{p_1, p_2, \cdots , p_f\}$ and $\{p_2, p_3, \cdots , p_{f+1}\}$ and the corresponding pattern be $\wp'$ and $\wp''$, respectively. The $f$ faulty robots do not move. 
		Let $\wp' = \{p_1',p_2',\cdots,p_{f+3}'\}$ and $\wp'' = \{p_1'',p_2'',\cdots,p_{f+3}''\}$. As $\varphi$ achieves pattern formation in one round, both $\wp'$ and $\wp''$ should be final.
		We have $p_{f+1} \neq p_{f+1}'$, $p_{f+2} \neq p_{f+2}'$ and $p_{f+3} \neq p_{f+3}'$, since all robots in $\wp'$ are in the pattern. 
		Similarly, we also have $p_1 \neq p_1''$,  $p_{f+2} \neq p_{f+2}''$ and $p_{f+3} \neq p_{f+3}''$ for $\wp''$.
		Since the robots at $\{p_2,\cdots,p_f\}$ did not move to form $\wp'$ and $\wp''$, these $f-1$ points are common between $\wp'$ and $\wp''$. 
		Since, $p_{f+1} \neq p_{f+1}'$ and $p_{f+1} = p_{f+1}''$, so $p_{f+1}$ cannot be a common point between $\wp'$ and $\wp''$. Out of $p_{f+2}$ and $p_{f+3}$, at most one can be a common point in the pattern, since there are at most $f$ common points. 
		Since $\varphi$ is deterministic, the destination for robots at $p_{f+2}$ and $p_{f+3}$ remains the same regardless of the destination pattern being $\wp'$ or $\wp''$. If $\{p_{f+2}',p_{f+3}'\} = \{p_{f+2}'',p_{f+3}''\}$, then $\wp'$ and $\wp''$ have $f+1$ common points. This is a contradiction since the patterns are different. Hence no deterministic algorithm can solve the pattern formation problem with faults in one round.
		The arguments hold analogously for $n\geq f+3$~robots.
	\end{proof}
	
	Next, we show a lower bound on the number of robots required to solve the pattern formation problem. A configuration with exactly $f$ robots is trivially solvable since the $f$ robots are already in the pattern. 
	Note that for $f\geq 2$, $2f+1 \geq f+3$ holds. 
	 
	\begin{theorem}\label{thm:2f1}
		At least $2f+1$ robots are required to form a pattern passing through $f$ faulty robots for $f\geq 2$.
	\end{theorem}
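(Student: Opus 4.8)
The plan is to prove impossibility by contradiction, combining determinism and obliviousness of the robots with the same geometric fact used in Theorem~1 — two distinct conic patterns meet in at most $f$ points. So suppose some deterministic algorithm $\varphi$ forms the prescribed pattern whenever there are $n$ robots with $f<n<2f+1$, i.e.\ with $f+1\le n\le 2f$.

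First I would pin down a hard initial configuration $\mathcal{C}$: $n$ distinct points that are asymmetric, in which every $f$-subset is in (strictly) convex position — so every $f$-subset is a legal choice of faulty set — and in which no $f+1$ of the $n$ points lie on a common conic of the type prescribed for $f$ faults. For $f=2$ this last requirement just means no three points are collinear; in all cases $f\in\{2,3,4,5\}$ such configurations are generic and, since $n\le 2f$, they coexist with the convex-position requirement. The point of this choice is that faulty robots never move: in every configuration the execution visits, the $f$ faulty robots still occupy $f$ of these general-position points, so if $\varphi$ ever reaches a valid pattern, that pattern is a conic of the prescribed type through those $f$ unmoved points together with the $n-f$ non-faulty robots.

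Second I would exploit that the robots cannot see which of them are faulty. In any single round each robot's destination depends only on the observed configuration, which carries no information about faultiness; hence for two distinct candidate faulty sets $F\ne F'$ every robot computes the same destination that round, and the two executions differ only in which robots are frozen. I would use this to keep the adversary uncommitted: select $F$ and $F'$ so that the robots in the symmetric difference $F\triangle F'$ are never asked to move (or are moved identically) by $\varphi$ in either scenario, so that the two executions stay coupled on all of $F\cup F'$, which therefore remains pinned at the original points of $\mathcal{C}$ in both. The counting that makes such a choice of $F,F'$ possible is exactly $n\le 2f$: no $f$-subset of an $n$-set has more than $f$ robots outside it, and any two $f$-subsets intersect, so the algorithm can never drive the configuration into a state in which some group of more than $f$ robots is certified non-faulty — which is the only way oblivious robots could ever pin down the faulty set.

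Finally I would close with the geometry. If under the true faulty set $F$ the execution reaches a valid pattern, then all $n$ robots lie on a conic of the prescribed type — in particular all of $F\cup F'$ do, and since $F\ne F'$ that is $|F\cup F'|\ge f+1$ of the original points of $\mathcal{C}$ on a common conic of that type, contradicting the construction of $\mathcal{C}$. As $\mathcal{C}$ is a legal instance for the faulty set $F$, the contradiction stands, so $n\le 2f$ robots cannot suffice and at least $2f+1$ are required. The step I expect to be the genuine obstacle is the coupling in the third paragraph: unlike the one-round argument of Theorem~1, here the execution may last many rounds and the configuration after the first round already depends on who is frozen, so one must choose $F,F'$ with care and then verify that this coupling invariant survives round by round; the geometric counting at the end is, by comparison, the routine part, being essentially the ``two conics meet in at most $f$ points'' observation reused.
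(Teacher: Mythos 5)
Your overall framing (a generic configuration in which no $f+1$ points lie on a common conic of the prescribed type, plus the fact that two distinct such conics share at most $f$ points) matches the paper's, but the step you yourself flag as the obstacle is a genuine gap, and in the form you state it it cannot be closed. You need two faulty sets $F\neq F'$ such that in \emph{both} executions every robot of $F\cup F'$ stays at its initial position. Consider an algorithm that in the very first round assigns every robot a destination distinct from its current position (the paper's own algorithms do exactly this in their first round, moving everyone outside the smallest enclosing circle): under the true faulty set $F$, every robot outside $F$ leaves its initial position immediately, so for any $F'\neq F$ the robots of $F'\setminus F$ are not pinned after one round, no matter how $F$ and $F'$ are chosen; the option ``moved identically'' is also unavailable, since such a robot is frozen in one scenario and displaced in the other. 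Hence the coupling invariant fails against a large and natural class of algorithms, and the concluding geometric contradiction (at least $f+1$ original points of $\mathcal{C}$ on the final conic) never materializes. The counting you invoke to justify the coupling is also wrong at the boundary case: for $n=2f$, which the theorem must cover, two $f$-subsets need not intersect (take complementary sets), so ``any two $f$-subsets intersect'' cannot be used.

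The paper argues along a different line: since at most $f$ of the current positions can lie on whatever target pattern the algorithm selects, at least $n-f\geq 1$ robots are required to move, and because $n-f\leq f$ the adversary can place all of those robots inside the faulty set. These robots then never move, so the selected pattern is never completed, and the configurations reached keep at most $n-f\leq f$ robots on the pattern --- the obstruction is a stagnation/invariant on configurations, not a pair of faulty sets whose union stays pinned. To salvage your route you would have to replace ``keep $F\cup F'$ pinned'' by an indistinguishability statement about the configurations reachable after the first round (e.g.\ that with only $n-f\leq f$ moved robots the adversary can make the post-move configuration ambiguous between two faulty sets), which is a substantially different argument from the one you outline.
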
   
	\begin{proof}
		Consider the number of robots to be $f < n < 2f+1$. Assume that the configuration of these $n$ robots is such that no $f+1$ robots are in the same pattern. Let $\varphi$ be a deterministic algorithm, which decides the destination of the robots given a configuration. Since the robots are oblivious, it is impossible to determine which robots are faulty given a configuration. As we consider patterns from the conic section, a pattern corresponding to $f$ can be uniquely determined through $f$ robots.

		Let $\varphi$ decide the target pattern corresponding to a set of $f$ robots for the given configuration $\mathcal{C}$. So the other $n-f$ robots have to move to the pattern. Since the algorithm cannot determine which robots are faulty, the adversary can always choose the $0<n-f$ robots to be faulty. Since $n-f \leq f$, none of the robots move. This leads to a stagnated configuration, and the algorithm does not proceed further.

		If the algorithm decides a pattern that passes through less than or equal to $f$ points in the configuration, then we choose faulty robots out of the points which are not on the pattern, and we arrive at a configuration where not all the robots are in the same pattern. 
		Now there are at most $n-f \leq f$ robots on the pattern, which is the same as the previous configuration. 
	\end{proof}

		\section{Detailed Algorithms}\label{sec:algos}
	In this section, we provide the promised two round algorithms for the different 
		configurations.
		To this end, we first provide algorithmic preliminaries ~in~\S\ref{subsec:algo-prem}.
		\subsection{Algorithmic Preliminaries}
		\label{subsec:algo-prem}
		In our algorithms, we will perform case distinctions according to the following three types of configurations:
		\begin{definition}[\textit{Terminal Configuration}]
			A configuration is a terminal configuration if all the robots form the target pattern corresponding to $f$ faulty robots.
		\end{definition}
		\begin{definition}[\textit{Type I Configuration}]
			If exactly $n-f$ robots are in the target pattern corresponding to $f$ faulty robots, then it is a Type I configuration.
		\end{definition}
		A Type I configuration can be symmetric or asymmetric.
		\begin{definition}[\textit{Type O Configuration}]
			If a configuration is not Terminal or Type I, then it is a Type O configuration.
		\end{definition}
		Note that an initial asymmetric configuration can be a Type I or Type O or Terminal configuration. In  the following, we also distinguish the configurations based on the uniform spacing between the robots. We use the uniform positions of the robots as a differentiating factor between faulty and non-faulty.
		\begin{definition}[\textit{Uniform Configuration}]
			A configuration is a uniform configuration if the distance between all consecutive pairs of robots in the configuration along the pattern is the same.
		\end{definition}
		\begin{definition}[\textit{Quasi-Uniform Configuration}]
			If a uniform configuration with $m$ uniform positions is occupied by $n$ robots, where $n\leq m\leq 2n$, then it is a quasi-uniform configuration.
		\end{definition}
		With the assumption (\S\ref{subsec:model}) that the initial configuration is asymmetric, we can obtain an ordering of the robots using the 
		algorithm by Chaudhuri et al.~\cite{ChaudhuriM15}.
		\begin{lemma}\label{lem:orderable}
			An asymmetric configuration is orderable.~\cite{ChaudhuriM15}
		\end{lemma}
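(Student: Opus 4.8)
The plan is to recover the canonical ordering of~\cite{ChaudhuriM15} by attaching to every robot a rotation- and reflection-invariant \emph{signature} and then ordering the robots lexicographically by these signatures. First I would fix a reference point that does not depend on any robot's local coordinate system: the center $o$ of the smallest enclosing circle of $\mathcal{C}$ (the centroid works equally well). Every robot can compute $o$ from its own snapshot, since it observes the exact relative positions of all robots. I would then describe each robot $p_i$ by the finite string $s(p_i)$ obtained by listing the polar coordinates $(\rho,\theta)$ of all points of $\mathcal{C}$ with respect to $o$, read off in increasing angular order starting from the ray $op_i$; to stay independent of handedness (the robots need not agree on orientation), I take the lexicographically smaller of the two strings produced by reading clockwise and counter-clockwise.

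Next I would define $p_i \prec p_j$ iff $s(p_i)$ is lexicographically smaller than $s(p_j)$, and the crux of the argument is the claim that in an asymmetric configuration all the strings $s(p_i)$ are pairwise distinct, so that $\prec$ is already a strict total order. Indeed, suppose $s(p_i)=s(p_j)$ for some $i\neq j$. Equality of these angle-normalized point lists means that the rotation about $o$ (possibly composed with a reflection through the line $op_i$) that sends the ray $op_i$ to the ray $op_j$ maps $\mathcal{C}$ onto itself; this isometry is not the identity because $p_i\neq p_j$, contradicting asymmetry of $\mathcal{C}$. Hence $\prec$ totally orders the $n$ robots, and since $o$ and all relative positions are computable from any robot's snapshot, every robot computes the same order, which is the desired ordering.

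The main obstacle I expect is the degenerate bookkeeping rather than the core idea: the case where $o$ coincides with a robot position (that robot receives a distinguished minimal signature, and the remaining $n-1$ robots are ordered by the same rule, again invoking asymmetry), the case where several robots lie on one ray from $o$ (resolved by using the $\rho$-coordinate inside the lexicographic comparison), and the collinear case, where the smallest enclosing circle degenerates and one instead anchors the construction at the two extreme robots on the line. One must also argue that ``equal signatures'' is \emph{equivalent} to, not merely implied by, the existence of a nontrivial symmetry, which is precisely what makes asymmetry the right hypothesis; this is the step where the proof of~\cite{ChaudhuriM15} does the real work, and I would follow it.
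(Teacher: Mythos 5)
The paper does not prove this lemma at all --- it is imported as a black box from Chaudhuri et al.~\cite{ChaudhuriM15} --- so there is no in-paper argument to compare against. Your reconstruction is, in substance, the standard ``view''-based proof that underlies that citation: attach to each robot an isometry-invariant signature, observe that equality of two signatures yields a nontrivial symmetry of the configuration (a rotation about $o$, possibly composed with a reflection), and conclude from asymmetry that the signatures are pairwise distinct and hence induce a total order that every robot computes identically from its own snapshot. The core claim and the contradiction with asymmetry are sound. Two small points of bookkeeping, both of which you already gesture at: (i) as defined, your signature genuinely fails to separate two robots on the same ray from $o$ (the induced ``rotation'' is the identity, so no contradiction with asymmetry arises), and the fix must be built into the definition --- either prepend the robot's own distance $\rho(p_i)$ to its string, or, more standardly, define the view relative to the robot's own position rather than relative to $o$, which is the convention in the cited line of work; (ii) the smallest enclosing circle does not actually degenerate for a collinear configuration (it is the circle on the two extreme points as diameter), so no separate anchoring is needed there --- only the same-ray issue from (i) arises. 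With those repairs the argument is complete and matches what the citation provides.
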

		Using Lemma \ref{lem:orderable}, we can thus always obtain an ordering among the robots. We use this ordering to determine the target pattern in case of a Type O configuration. 
		In general, having an ordering allows us to have complete agreement on the coordinate system, i.e., the robots agree on the direction and orientation.

		Let $\mathcal{O}$ be an ordering of the robots that maps the set of robots to a set of integers $\{1,2,\ldots, n\}$ such that each robot corresponds to an integer. This is the rank of the robot in the ordering. 
		In case of symmetry, two robots can have different orderings. 
		Note that locally, the ordering is unique for a particular robot, but that from a global perspective, different robots can have different~orderings.

		We use the ordering to determine a target pattern only in cases where there are multiple potential target patterns. We always choose the target pattern passing through the smaller ranked robot in the ordering.

		Since the algorithm takes at most two steps to reach a pattern containing all faulty robots, we denote the \textit{initial}, \textit{transitional} and \textit{final} configurations as $\mathcal{C}_0$, $\mathcal{C}_1$ and $\mathcal{C}_2$, respectively.

		
		\subsection{Algorithm}
		We first present a general algorithm with two different strategies for the open pattern and closed pattern. Among the conic patterns, line, parabola and hyperbola are open patterns, while circle and ellipse are closed patterns. The target position of the robots depend on this since open conics have two end points, while closed do not have any. 
		The length of the pattern, $\mathcal{P}$ is determined with respect to the pattern being formed. For line, $\mathcal{P}$ is the distance between the two endpoints. For parabola (hyperbola), $\mathcal{P}$ is the length of the parabola (hyperbola) between the points where the latus rectum\footnote{The latus rectum is the line that passes through the focus of the parabola and parallel to the directrix. 
		} of parabola (hyperbola) intersects the parabola (hyperbola). In case of circle and ellipse, $\mathcal{P}$ is the perimeter.
		We denote $u$ as the uniform distance at which the points on the target pattern are determined. The length $u$ is computed along the pattern. 
		We differentiate between two types of configurations:
		
		\subsubsection{Type O Configuration:} In this case, we form the target pattern outside the smallest enclosing circle of the configuration. The target pattern can be uniquely determined using the asymmetricity of the configuration, since a Type O configuration can only appear in the initial state. The target pattern size is dependent on the diameter of smallest enclosing circle. 
		\begin{itemize}
			\item Compute the smallest enclosing circle of the configuration where $O$ is the center and $d$ is the diameter.
			\item Let $A$ be the location of the robot with the smallest rank in the ordering. If $A$ is $O$, then we choose the robot with second smallest rank.
			\item Find point $B$ such that $B$ lies on $\overrightarrow{OA}$ and $|\overline{OB}| = d$.
			\item The target pattern corresponding to $f$ passes through $B$.
		\end{itemize}
		\begin{figure}\centering
			\subfloat[$f=2$
				\label{fig:typeOf2}]{\centering
				\includegraphics[height=0.2\linewidth]{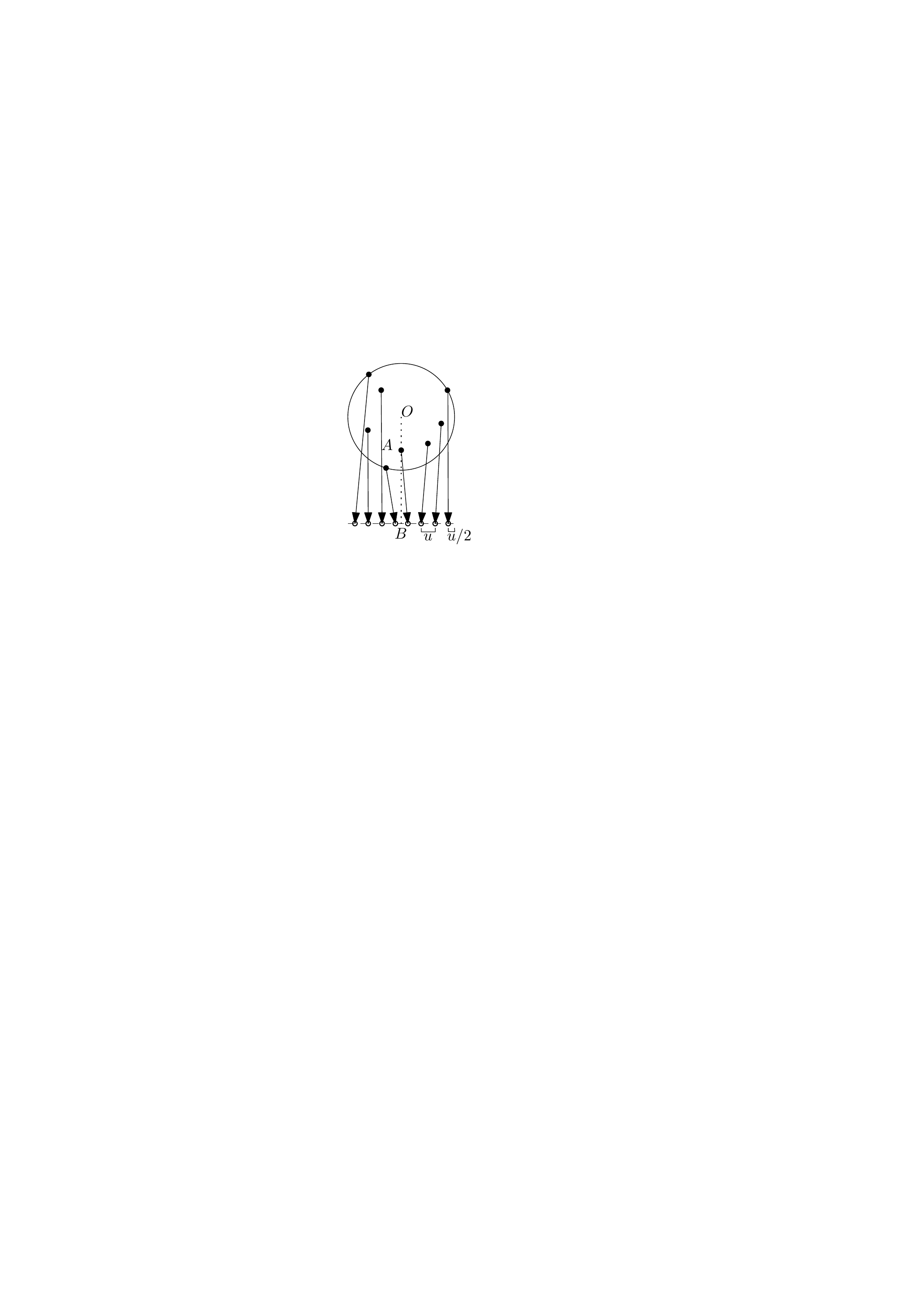}
				}
			\subfloat[$f=3$\label{fig:typeOf3}]{
				\includegraphics[height=0.2\linewidth]{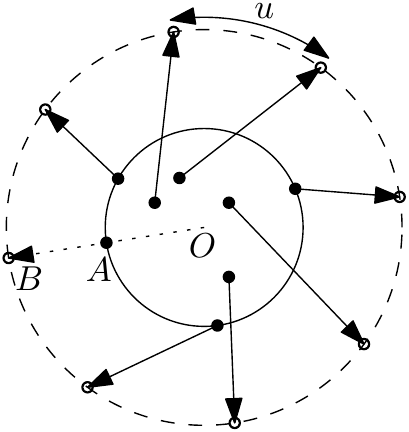}}
			\subfloat[$f=4$ and $f=5$\label{fig:typeOf4}]{
				\includegraphics[height=0.2\linewidth]{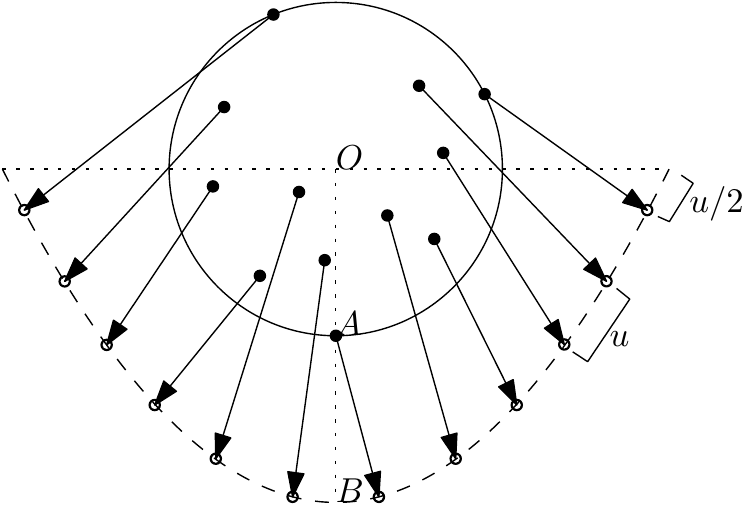}
			}
			\caption{Target patterns (dashed) for Type O configuration}
			\label{fig:typeO}
		\end{figure}
		Now, we show how we determine the target pattern corresponding to the number of faults (see Fig.~\ref{fig:typeO}).
		\begin{description}
			\item[$f=2$:] The target line is perpendicular to $\overline{OB}$ and has its midpoint at $B$ with length $d$ (ref. Fig.~\ref{fig:typeOf2}).
			\item[$f=3$:] The target circle passing through $B$ has radius $d$ and center at $O$ (ref. Fig.~\ref{fig:typeOf3}).
			\item[$f=4$ and $f=5$:] The target parabola has its vertex at $B$ and focus at $O$. The latus rectum of parabola is perpendicular to $\overline{OB}$ and length of latus rectum is $2d$ (ref. Fig.~\ref{fig:typeOf4}).
		\end{description}
		For an open pattern, 
		we choose the first point at a distance $u/2$ from one end point and place subsequent $n-1$ points at distance $u = \mathcal{P}/n$ (ref. Fig.~\ref{fig:typeOf2}, \ref{fig:typeOf4}).
		
	\subsubsection{Type I Configuration:} The target pattern is determined as passing through the robots which are not part of the pattern in the existing configuration. The destinations for robots are at points uniformly positioned at distance $u$ corresponding to the length of the pattern. However, there is a possibility that the existing pattern and target pattern intersect. We hence avoid using the intersection points as target points. If an intersection point is a target point for a value of $u$, then we choose $u'$ depending on the configuration.
    \begin{figure}[h!t]\centering
		\subfloat[$f=2$\label{fig:asymTypeIf2}]{
			\includegraphics[width=0.15\linewidth]{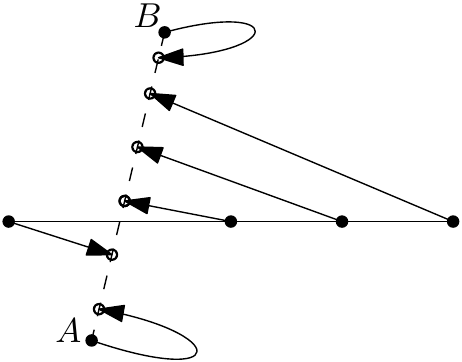}
			}
		\subfloat[$f=3$\label{fig:asymTypeIf3}]{
			\includegraphics[width=0.2\linewidth]{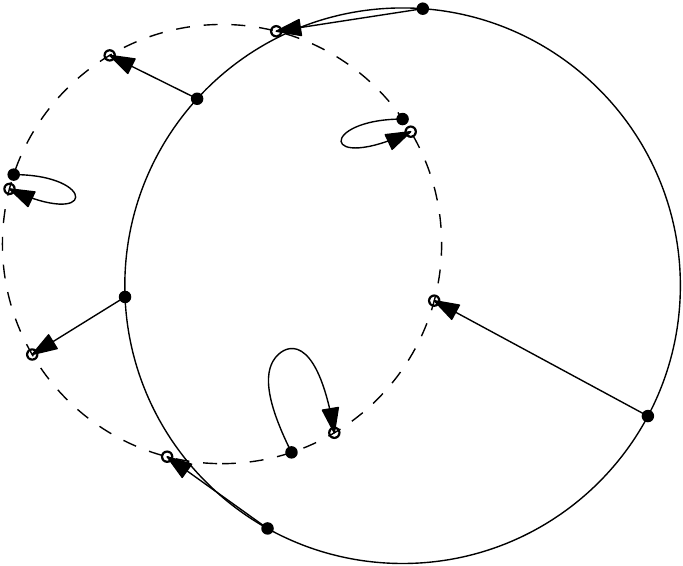}
		}
		\subfloat[$f=4$\label{fig:asymTypeIf4}]{
		\includegraphics[width=0.2\linewidth]{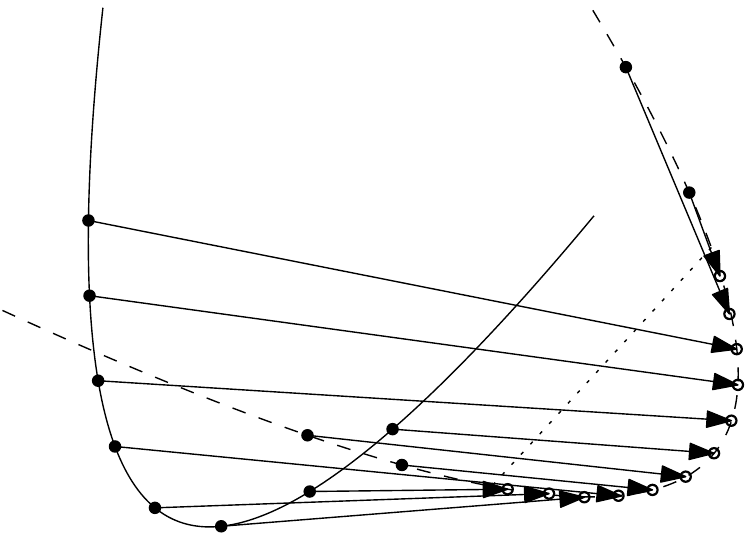}
		}
		\subfloat[$f=5$\label{fig:asymTypeIf5}]{
		\includegraphics[width=0.2\linewidth]{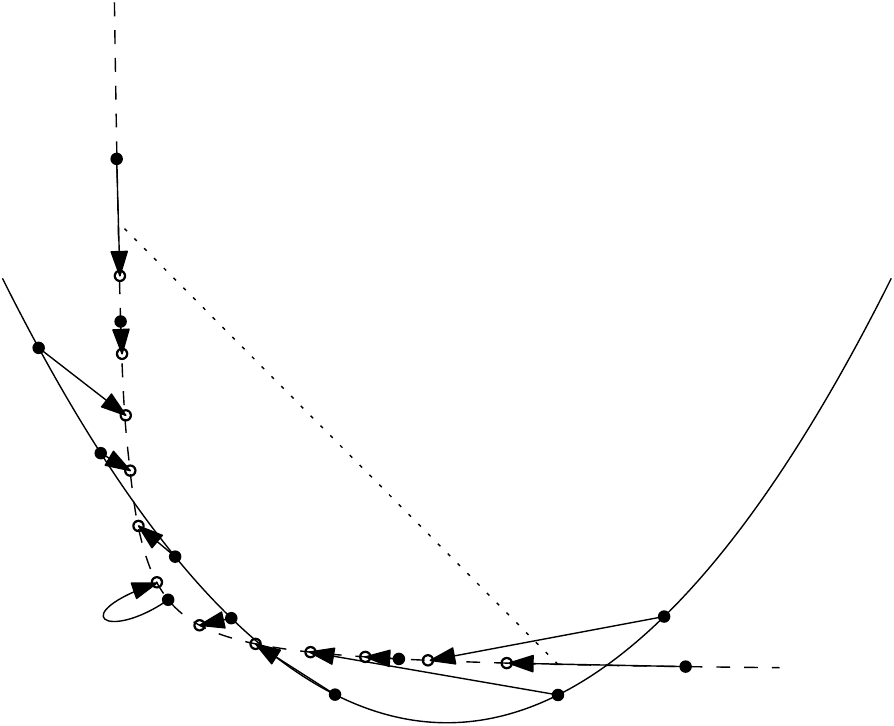}
		}
		\caption{Target patterns (dashed) for Asymmetric Type I configurations}\label{fig:asymTypeI}
	\end{figure}
		\begin{description}
			\item[Asymmetric Type I:] The robots are assigned target points according to their rank in the ordering with $u = \mathcal{P}/n$ (ref.~Fig.~\ref{fig:asymTypeI}).
			If the target point corresponding to $u$ overlaps with intersection points, then we choose $u' = \mathcal{P}/(n+1)$. The number of target points with respect to $u'$ is $n+1$, so we assign two destinations to the robot with highest rank, which can be chosen arbitrarily~(ref.~Fig.~\ref{fig:non-overlapping}).
			For $f=5$, the intermediate configuration can be a parabola, hyperbola or an ellipse. Thus, it can also appear as an initial configuration.
			We show all the transition between ellipse, parabola and hyperbola in the Appendix.
		
			\item[Reflective Symmetric Type I:] Let $k$ be the number of robots which lie on the line of symmetry. In this case, we choose, $u = \mathcal{P}/(n+k)$. We get $k$ extra target points so that we can assign two target points to the robots on the line of symmetry. A robot on one side (say left) of the line of symmetry finds its destination on the same side. Since the robots on the line of symmetry may not have a common left or right, they can choose one of the two symmetric destinations as their target. This also ensures that the next configuration is not completely symmetric (ref. Fig.~\ref{fig:refsymTypeI}), since only one of the symmetric points would be occupied. On overlap of target pattern points with existing points, we choose $u'$ similar to asymmetric Type I configuration. We show other transitions for $f=5$ in the Appendix.
		\begin{figure}
			\centering
			\subfloat[$f=2$\label{fig:refsymTypeIf2}]{
				\includegraphics[width=0.25\linewidth]{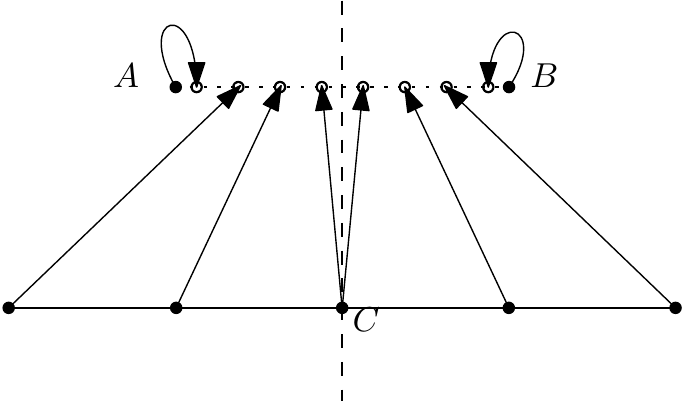}}
			\subfloat[$f=3$\label{fig:refsymTypeIf3}]{
			\includegraphics[height=0.18\linewidth]{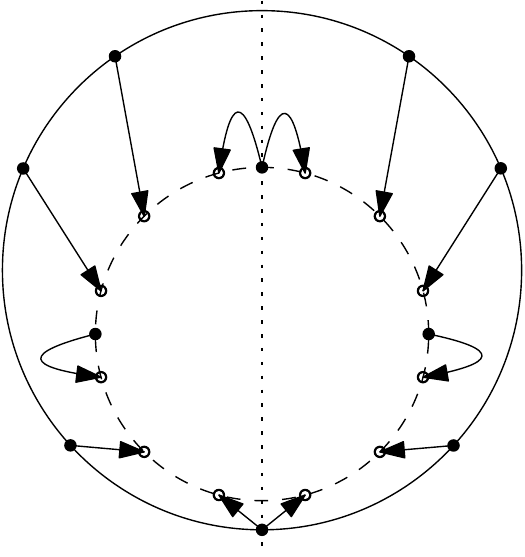}
			}
			\subfloat[$f=4$\label{fig:refsymTypeIf4}]{
			\includegraphics[width=0.25\linewidth]{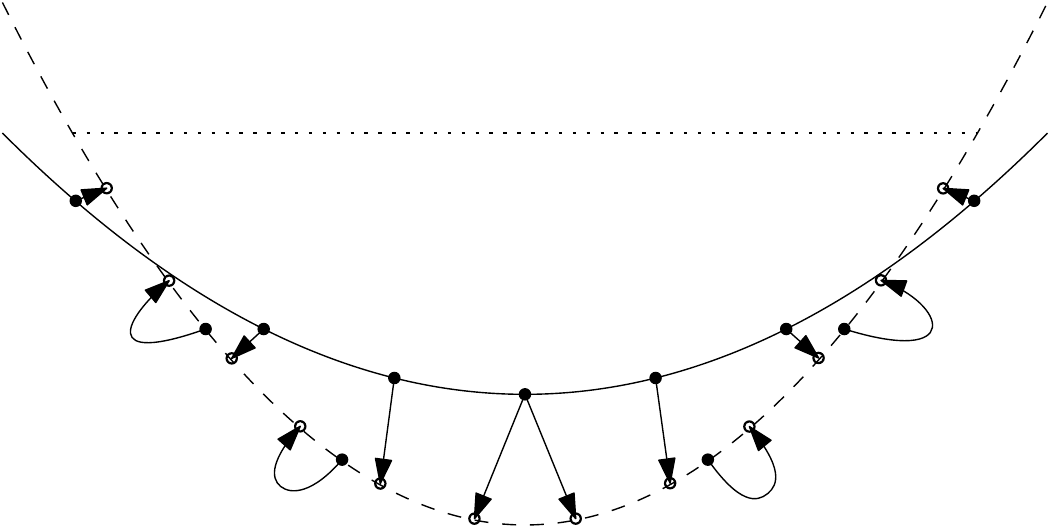}
			}
			\subfloat[$f=5$\label{fig:refsymTypeIf5}]{
			\includegraphics[width=0.25\linewidth]{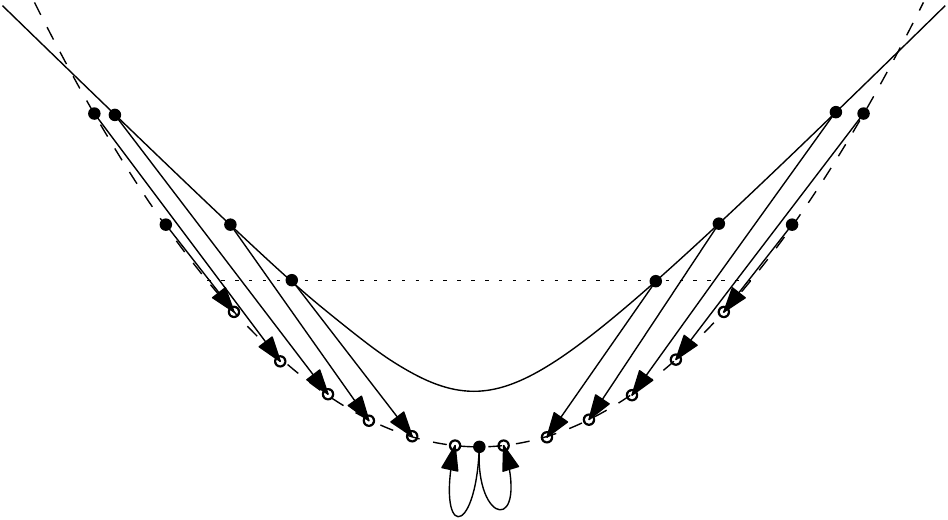}
			}
			\caption{Target patterns (dashed) for Reflective Symmetric Type I configurations}\label{fig:refsymTypeI}
		\end{figure}

			\item[Rotational Symmetric Type I:] This can occur only in the case of lines and circles. In case of a line, the robots move to the side with smaller angle between the target line $m$ and existing pattern $l$ (ref. Fig.~\ref{fig:rotationSymTypeIf2}). In case of a circle, the configuration is two concentric circles. We assign two destinations for each robot already on the target circle by choosing $u = \mathcal{P}/(n+k)$, where $k = 3$~(ref.~Fig.~\ref{fig:rotationSymTypeIf3}).
		\end{description}
		\begin{figure}[h!t]
			\centering
			\subfloat[$f=2$\label{fig:rotationSymTypeIf2}]{
				\includegraphics[height=0.15\linewidth]{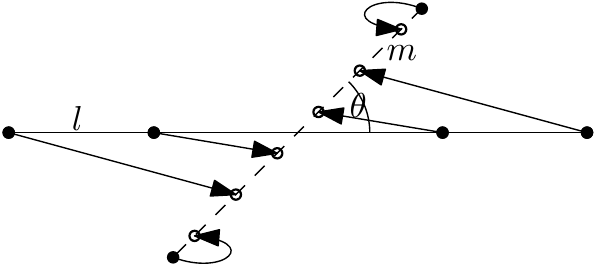}
			}
			\subfloat[$f=3$\label{fig:rotationSymTypeIf3}]{
			\includegraphics[height=0.15\linewidth]{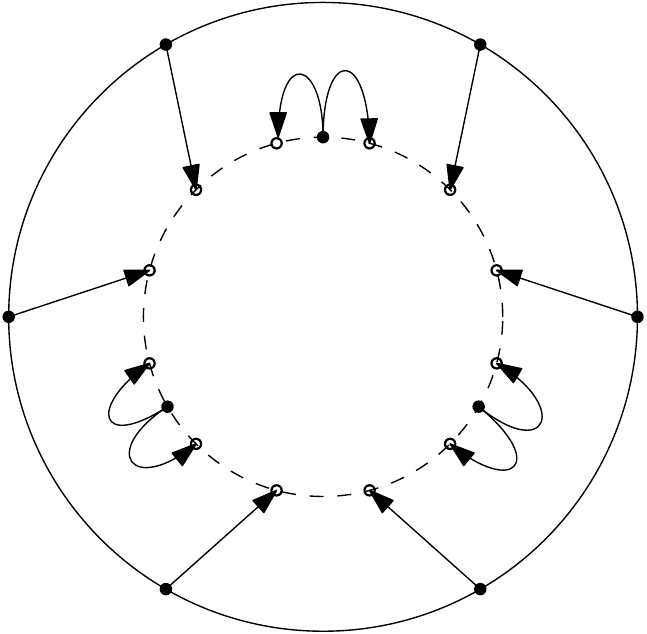}
			}
			\caption{Target patterns (dashed) for Rotational Symmetric Type I configurations}
			\label{fig:rotationSymTypeI}
		\end{figure}

		\begin{figure}[h!t]
			\centering
				\includegraphics[width=0.6\linewidth]{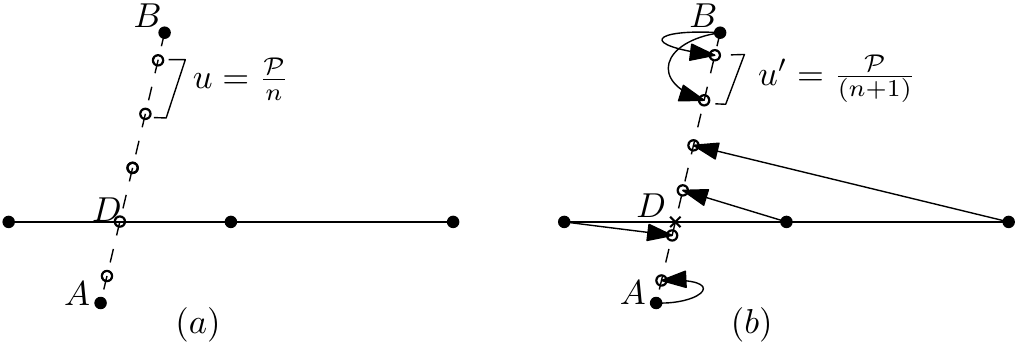}
				\caption{Choosing Non-overlapping uniform points in the target pattern (dashed)}
			\label{fig:non-overlapping}
			\end{figure}
	\subsubsection{Choosing Non-overlapping Uniform Points:}
	In case of an asymmetric Type I configuration, we obtain an ordering $\mathcal{O}$. 
	For $f=2$, the target line passes through $A$ and $B$.
	Let $A$ be the robot with smaller rank among $A$ and $B$. We choose the set of uniform points at a distance $u/2$ from $A$ towards $B$. Let $D$ be the intersection point of the existing line in the current configuration and the line through $A$ and $B$ ($D$ is marked as a cross in Fig.~\ref{fig:non-overlapping}b). If the uniform points overlap with $D$ (ref. Fig.~\ref{fig:non-overlapping}a), then we choose another set of points at uniform distance $u' = \mathcal{P}/(n+1)$. Since, $|\overline{AD}| = 3u/2$, the uniform points overlap with $D$.
	We have to choose a set of uniform points corresponding to $u'$ as shown in Fig.~\ref{fig:non-overlapping}b.
    With $u'$, there are $n+1$ uniform points, the robot $B$ has two potential destinations, one of which can be chosen arbitrarily.
    
    For $f=3$, the target pattern passes through the three points, which are not in the circle in the current configuration. Two circles can have at most two intersection points. Let the robots at $A$, $B$, and $C$ be the three robots that were not on the circle of the current configuration. Let $d$ be the diameter of the circle passing through $A$, $B$, and $C$. The pattern length is $\mathcal{P} = \pi d$. We choose uniform distance $u = \mathcal{P} /n$. Without loss of generality, let the order among $A$, $B$ and $C$ follow $A < B < C$. Let the intersection points of the two circles be $D$ and $E$ (ref. Fig.~\ref{fig:AsymTypeIf3}). We determine a set $\mathcal{U}$ of uniform points which correspond to $A$, $B$ or $C$, i.e., the set of points at a uniform distance from these points on the pattern. In Fig.~\ref{fig:AsymTypeIf3}, the uniform points corresponding to $A$, $B$, and $C$ are denoted with empty squares, filled squares, and crosses, respectively.
	\begin{figure}[ht]
		\centering
				\includegraphics[width=0.3\linewidth]{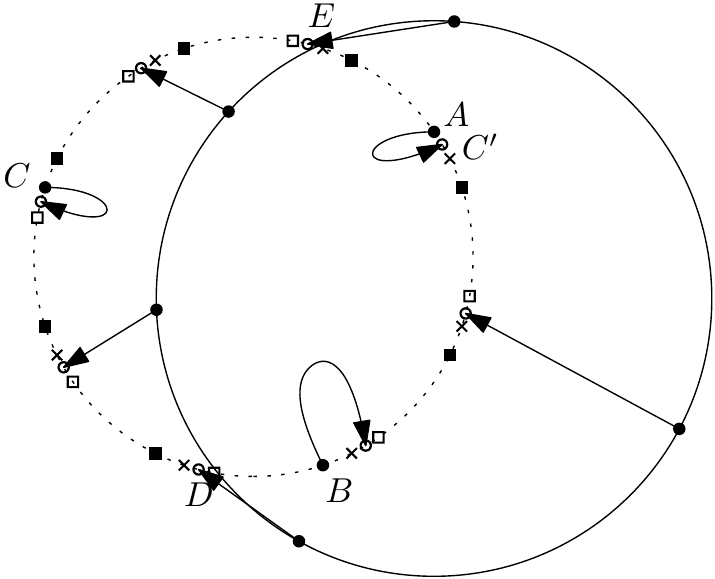}
		\caption{Choosing non-overlapping uniform points for $f=3$}
		\label{fig:AsymTypeIf3}
	\end{figure}
	Take the distance between $A$ and the closest point from $\mathcal{U}$, i.e., $C'$. Consider the target points as the uniform set corresponding to the midpoint of the arc $\overbow{AC'}$. Fig.~\ref{fig:AsymTypeIf3} shows these points marked with small circles on the target circle (dotted circle). If the uniform point set includes the intersection points $D$ or $E$, we choose the second closest point from $A$. The clockwise direction is the direction of $A$ towards $B$. The ordering of the target pattern points follows the clockwise direction starting from the point near $A$.
	If the uniform point set fails to have a non-overlapping set of points, then we expand the uniform set $\mathcal{U}$ to include points with uniform distance $u' = \mathcal{P}/(n+1)$. If there are $n+1$ destinations, then the robot with the highest rank would have two~destinations.

	\begin{figure}[ht]
		\centering
		\includegraphics[width=\linewidth]{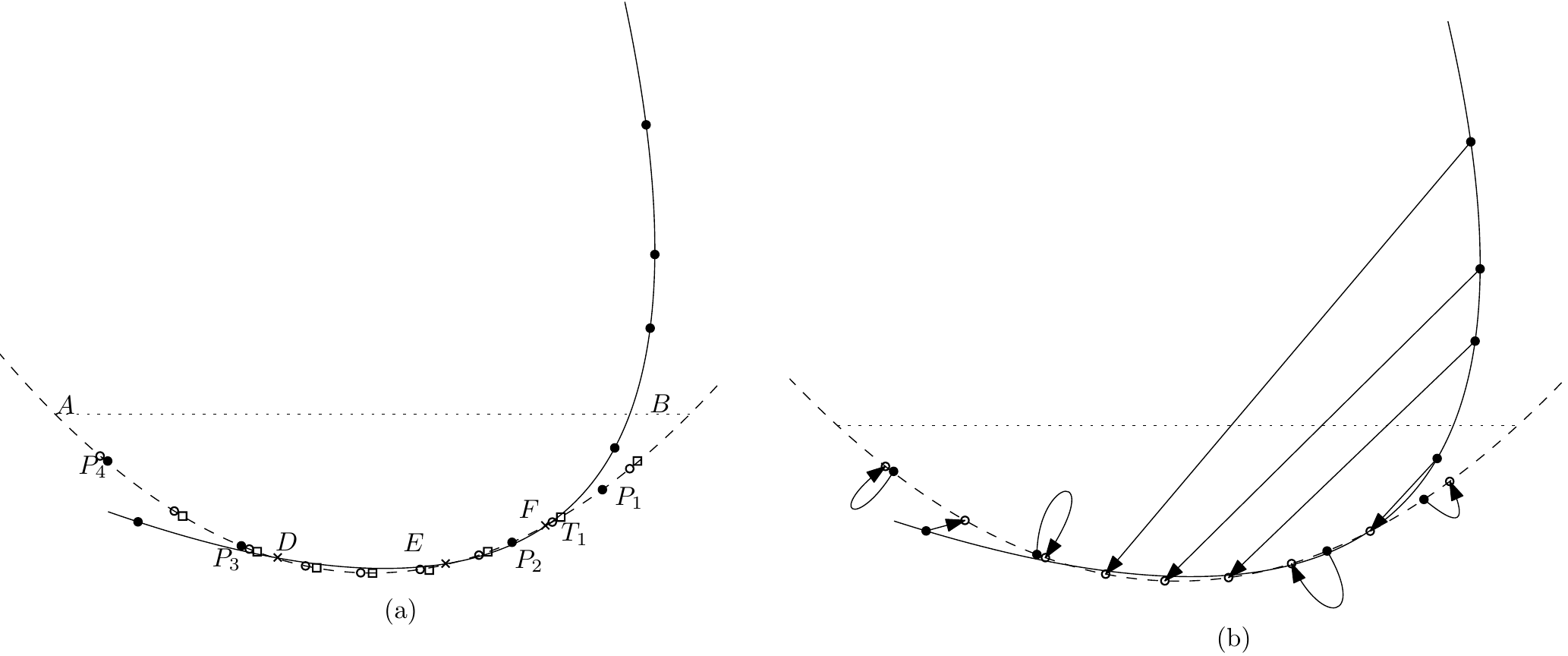}
		\caption{Choosing non-overlapping uniform points for $f=5$}
		\label{fig:AsymTypeIf4}
	\end{figure}
	For $f = 4$, we need to choose uniform points on the parabola between the points where the latus rectum intersects the parabola (Points $A$ and $B$ as shown in Fig.~\ref{fig:AsymTypeIf4}a). Unlike the case of the line, some points in the existing configuration can already lie inside the part where uniform points are to be placed. Our objective is to avoid the existing points. For Asymmetric Type I, we choose the first uniform point at a distance $u/2$ from $A$, where $u = \mathcal{P}/n$ and other uniform points consecutively at distance $u$. Let this set be the set of uniform points $\mathcal{U}$. If the uniform points overlap with any existing points in the configuration or the intersection points of the current and target pattern, then we shift the uniform points. As shown in Fig.~\ref{fig:AsymTypeIf4}, the point $P_4$ lies at a distance $u/2$ from $A$. The chosen uniform points overlap with an existing point on the configuration. Now, we can find the closest point that is at non-zero distance from the points in set $\mathcal{U}$ among the intersection points ($D$, $E$ and $F$ are marked as cross in Fig.~\ref{fig:AsymTypeIf4}a) and robot positions on the target pattern in the current configuration ($P_1$, $P_2$, $P_3$ and $P_4$ on the dashed parabola in Fig.~\ref{fig:AsymTypeIf4}a). In Fig.~\ref{fig:AsymTypeIf4}a, the point $F$ is the closest to the point  uniform set $\mathcal{U}$. Thus we shift the uniform point in $\mathcal{U}$, $T_1$ near $F$ towards $F$ to the midpoint of the curve segment $FT_1$. Likewise, we shift all the target points in $U$ by the same distance in the same direction. Thus, the resulting set of uniform points (marked as circles in Fig.~\ref{fig:AsymTypeIf4}a and \ref{fig:AsymTypeIf4}b) does not overlap with any of the existing points in the configuration or the intersection of the current and target pattern.

	For $f=5$, we use similar strategies as with $f=4$ for the case of parabola and hyperbola. In case of an ellipse, we use a strategy similar to the circle as mentioned for $f=3$.

%

	\section{General Algorithmic Framework}\label{sec:correctness}

	The algorithm broadly has two steps, based on the current robot configuration.
	\begin{description}
			\item[Step~1:] Determine the faulty robots.
			\item[Step~2:] Move to a pattern passing through the faulty robots. 
	\end{description}
	Since the robots are oblivious, they cannot distinguish between Step~1 and Step~2. Hence the properties required for Step~1 have to be applicable to Step~2 and vice versa. For Step~1, a simple process to determine all the faulty robots in a single round is to move all the robots, such that the robots which do not move, will not lie on the pattern points. 
	We determine the pattern points uniformly so that all pattern points are consecutively equidistant along the pattern. This helps us in determining the faulty robots since they would not lie on a pattern point.
	For Step~2, the pattern determined from the faulty robot positions has to be unique, so that all the robots agree on the pattern. Overall, the algorithm needs to determine a unique pattern such that all robots agree on the pattern, and all robots are required to move to achieve the pattern. 
	\noindent We present a transition diagram for configurations in our algorithm in Fig.~\ref{fig:transition}.
	\begin{figure}[ht]
			\centering
		\includegraphics[width=0.6\linewidth]{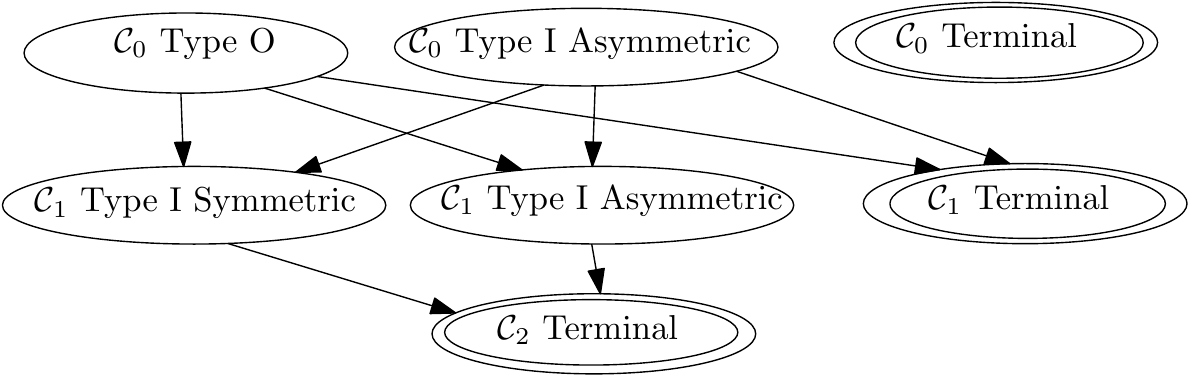}
		\caption{Transition diagram of configurations}
			\label{fig:transition}
	\end{figure}
	
	\begin{lemma}\label{lem:distinct}
		The destinations of all robots are distinct.
	\end{lemma}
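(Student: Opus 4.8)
The plan is to argue by a case distinction on the type of the current configuration, following the algorithm of \S\ref{sec:algos}, and to verify in each case two independent facts: (a) the set of target points placed on the conic consists of pairwise distinct points, and (b) the map sending each robot to its target point is injective. Since the destination assigned to a robot (faulty or not) is computed by the same rule, (a) and (b) together give the claim. If the configuration is \emph{Terminal}, no robot moves, so the destinations equal the current positions, which are distinct by Assumption~\ref{ass:distinctinitial}; hence assume the configuration is Type O or Type I.

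For (a): in every case the target points sit on the conic at a single common arc-length spacing --- $u = \mathcal{P}/n$, or $u' = \mathcal{P}/(n+1)$ in the overlap case, or $u = \mathcal{P}/(n+k)$ in the symmetric cases --- with (number of points) times (spacing) at most $\mathcal{P}$. A conic arc of length $\mathcal{P}$ is a simple curve, so its arc-length parametrization is injective on $[0,\mathcal{P})$ (taking residues modulo $\mathcal{P}$ for the closed conics, circle and ellipse); since every pair of target points is separated, along the curve, by a nonzero multiple $j$ of $\mathcal{P}/m$ with $0<j<m$ where $m$ is the number of points, no two of them coincide. The one subtlety is the non-overlap shift used for $f=4,5$ and the arc-midpoint shift used for $f=3,5$: there the entire set $\mathcal{U}$ is translated along the conic by one fixed arc-length amount, which preserves the cardinality and all pairwise arc-length gaps, hence preserves distinctness. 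So (a) holds throughout.

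For (b): in a Type O configuration and in an \emph{asymmetric} Type I configuration the configuration is asymmetric, so by Lemma~\ref{lem:orderable} all robots share one ordering $\mathcal{O}$, and the algorithm maps the rank-$i$ robot to the $i$-th target point (in the $u'$ case the robot of highest rank takes one of the two leftover points); this is injective because $\mathcal{O}$ is a bijection onto $\{1,\dots,n\}$ and the target points are distinct by (a). The substantive case --- and the main obstacle --- is a \emph{symmetric} Type I configuration, where different robots may compute mirrored (or rotated) orderings, so no common ordering is available. Here I will instead use the side/pair structure of the assignment: each off-axis robot goes to a target point on its \emph{own} side of the axis of symmetry (ranked among the robots on that side), so robots on opposite sides land on opposite sides and robots on the same side are separated by that side's ordering; each of the $k$ axis robots is given its own reserved symmetric \emph{pair} of target points and selects one of the two, and the $k$ reserved pairs --- precisely the $k$ extra slots produced by using $\mathcal{P}/(n+k)$ in place of $\mathcal{P}/n$ --- are pairwise disjoint and disjoint from all off-axis targets, so whichever point of its pair an axis robot picks cannot collide with any other destination. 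The rotational-symmetric subcase (lines and circles, with $k=3$ for the circle) is analogous, replacing ``side of the axis'' by the smaller-angle side for the line and by the two concentric circles for the circle. The hard part is exactly this symmetric Type I analysis: I must pin down that the rule for the $k$ axis robots is well defined despite their left/right ambiguity (it is, since it only refers to a robot's reserved pair and the two symmetric points inside it, all defined symmetrically) and that the count $2k+(n-k)=n+k$ is exactly tight, so that no reserved pair is shared and every off-axis robot still owns a free target point; once that bookkeeping is fixed, disjointness of the pairs follows from (a) and the proof of (b) closes.
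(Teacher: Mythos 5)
Your proof is correct and follows essentially the same route as the paper's: injectivity via the global ordering in the asymmetric (Type O / asymmetric Type I) cases, and in the symmetric cases the observation that each robot on the line (or center) of symmetry has its own reserved pair of symmetric destinations chosen by local orientation, so no collisions arise. You add an explicit verification that the uniformly spaced target points on the conic are themselves pairwise distinct, which the paper's shorter proof takes for granted, but this is an elaboration rather than a different argument.
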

	\begin{proof}
		We always follow the ordering to determine the destinations for each robot. In the asymmetric cases, we have the ordering, which creates a one-to-one map from the current position of a robot to its destination. In case of symmetric configurations, the robots which are present on the line of symmetry have two potential destinations (from a global perspective) and choose one of them according to their local orientation. Hence the destinations are distinct.
	\end{proof}
    Next, we show that there are no overlapping points between the current configuration and the set of destinations. The set of destinations is the set of all potential destination points for the robots in the current configuration. Since some robots are faulty, two consecutive configurations may have those points as common points. Had the robots been non-faulty, then they would have moved to a point that is not in the current configuration. 
	\begin{lemma}\label{lem:nointersection}
			Given configuration $\mathcal{C}$ and a destination set $\mathcal{C}'$, we have $\mathcal{C}\cap \mathcal{C}' =\phi$.
	\end{lemma}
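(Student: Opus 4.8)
The plan is to argue by a case analysis over the three configuration types (Terminal, Type O, Type I) that were introduced in \S\ref{subsec:algo-prem}, since the destination set $\mathcal{C}'$ is constructed differently in each case. For a Terminal configuration the claim is vacuous: the robots already form the target pattern and no further destinations are assigned, so there is nothing to show. The substantive work is in the Type O and Type I cases, and in each of them the key geometric fact is the same: the target pattern is constructed so that it is separated from (or meets in a controlled way) the smallest enclosing circle of $\mathcal{C}$, and the uniform points chosen on it are deliberately shifted away from any point of $\mathcal{C}$.

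For a \textbf{Type O} configuration I would use that the target conic is placed \emph{outside} the smallest enclosing circle of $\mathcal{C}$: by the construction in \S\ref{sec:algos}, the pattern passes through the point $B$ with $|\overline{OB}| = d$ (the diameter), and for each of $f=2,3,4,5$ the entire target pattern lies on or outside the circle of radius $d$ centered at $O$, while every point of $\mathcal{C}$ lies inside the smallest enclosing circle, i.e. within distance $d/2 < d$ of $O$. Hence no point of $\mathcal{C}$ can coincide with any destination point; I would just verify the ``on or outside radius $d$'' claim conic-by-conic (for the line it is immediate since the line is tangent at $B$; for the circle it is exact; for the parabola the vertex is at distance $d$ and the curve opens outward).

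For a \textbf{Type I} configuration the target pattern is the unique conic through the $n-f$ robots \emph{not} on the current pattern, and here the two patterns genuinely can intersect, so the argument is combinatorial rather than purely metric. I would invoke that two distinct conics of the relevant families meet in at most $f$ points, so there are finitely many ``bad'' points on the target conic: the at most $f$ intersection points, plus the at most $n-f$ current robot positions that happen to lie on the target conic. The uniform-spacing construction (choosing $u=\mathcal{P}/n$, or $u'=\mathcal{P}/(n+1)$, or the local shift described in ``Choosing Non-overlapping Uniform Points'') is precisely engineered to move the finite set of candidate destination points off this finite bad set, and the symmetric sub-cases are handled by the corresponding modified spacing $u=\mathcal{P}/(n+k)$. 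So the proof reduces to checking, for each of $f=2,3,4,5$ and each symmetry class, that a valid shift always exists — which follows because the bad set is finite while the family of admissible shifts is a continuum.

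The main obstacle I expect is the Type I case: one must be careful that the shifting procedure never creates a \emph{new} overlap while removing an old one, and that it remains consistent with the ordering so that Lemma \ref{lem:distinct} (distinctness of destinations) is not violated. The cleanest way to handle this is to note that at each shift we move the whole uniform set rigidly by a common arc-length amount chosen smaller than the minimum gap to the nearest bad point, so after the shift the set still has $n$ (or $n+1$) distinct points and misses all of the finitely many bad points simultaneously; existence of such an amount is guaranteed since the set of forbidden shift-lengths is finite.
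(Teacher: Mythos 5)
Your proposal is correct and follows essentially the same route as the paper: the Type O case is handled by the separation of the target pattern from the smallest enclosing circle, and the Type I case by noting that the points to be avoided (determining robots plus the at most $f$ conic--conic intersection points) form a finite set which the uniform-point construction (re-choosing $u'=\mathcal{P}/(n+1)$, the $\mathcal{P}/(n+k)$ variants, or the rigid arc-length shift) sidesteps. Your argument is in fact slightly more explicit than the paper's, which merely asserts that a suitable $u$ or shift can be chosen, whereas you justify existence via the finiteness of the bad set and verify the Type O separation conic-by-conic.
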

	\begin{proof}
		For a Type O configuration, the destinations are outside the smallest enclosing circle, and thus we have no intersection.
		For a Type I configuration, the uniform points are chosen such that they do not overlap with the intersection points or the potentially faulty robot positions through which the target pattern is determined. 
		Let there be $k$ points through which the target pattern is being determined for a Type I configuration. In this case, we assume that the robots are faulty, but that may not be true. So we need to make sure that the destination points do not overlap with any existing point. Otherwise, we would have two robots at a point in the resulting configuration.
		Let $u$ be the uniform distance at which the uniform positions to be chosen.
		If all $k$ points belong to the same uniform set, then we may have to choose another set of points since there is a possibility that the uniform points chosen, overlap with the intersection points of the current pattern and the target pattern. 
		So we choose a different value of $u$ and repeat the process again. For  an asymmetric Type I configuration, we choose $u' = \mathcal{P}/(n+1)$. In that case, we may obtain more points where we associate two destinations for the robot with the highest rank.
		Hence there is no intersection between the given configuration and the destination configuration.
	\end{proof}
	\subsection{Determining Faulty Robots and a Pattern}
	There are two types of initial configurations where we need to determine the faulty robots, i.e., arbitrary configurations and intermediate configurations. The destinations for the robots are such that no point in $\mathcal{C}_0$ overlaps with any point in $\mathcal{C}_1$. For an arbitrary initial configuration, the target pattern is scaled such that no point in the target pattern lies on or inside the smallest enclosing circle of $\mathcal{C}_0$. Since $\mathcal{C}_0$ is asymmetric, we can always uniquely scale the pattern. 
	We can moreover show that a unique pattern exists that passes through all the faulty robots, and 
obtain the following three lemmas.

	\begin{lemma}\label{lem:initarbitfault}
		It takes one round to determine all the faulty robots for a Type O configuration for $f \in \{2,3,4,5\}$.
	\end{lemma}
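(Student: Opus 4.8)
The plan is to run the Type~O branch of the algorithm (\S\ref{subsec:algo-prem}--\S\ref{sec:algos}) on an arbitrary asymmetric initial configuration $\mathcal{C}_0$ and show that the single transition it induces already pins down the faulty set. First I would invoke Lemma~\ref{lem:orderable}: since $\mathcal{C}_0$ is asymmetric it is orderable, which fixes a common coordinate system and hence a well-defined center $O$ and diameter $d$ of the smallest enclosing circle, a well-defined point $B$ on $\overrightarrow{OA}$ with $|\overline{OB}| = d$, and therefore a unique scaled target pattern $\Gamma$ of the type prescribed by $f$ (line for $f=2$, circle for $f=3$, parabola for $f=4,5$) together with its $n$ uniform sample points. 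Every robot looking at $\mathcal{C}_0$ computes the same $\Gamma$ and the same sample points.

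The geometric observation that makes one round enough is that $\Gamma$ lies \emph{strictly outside} the smallest enclosing circle of $\mathcal{C}_0$: every point of $\Gamma$ is at distance at least $d$ from $O$, while $\mathcal{C}_0$ lies inside the closed disk of radius $d/2$ about $O$. For $f=3$ this is immediate ($\Gamma$ is the concentric circle of radius $d$); for $f=2$ the point of the length-$d$ segment closest to $O$ is its midpoint $B$, at distance $d$; for $f=4,5$ the vertex $B$ is the point of the parabola closest to its focus $O$, again at distance $d$. Hence after one fully-synchronous round the transitional configuration $\mathcal{C}_1$ consists of the $n-f$ non-faulty robots placed at $n-f$ of the $n$ uniform points of $\Gamma$ (a one-to-one assignment by rank, Lemma~\ref{lem:distinct}) together with the $f$ faulty robots frozen at their $\mathcal{C}_0$ positions, hence still in the radius-$d/2$ disk. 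By Assumption~\ref{ass:distinctinitial} and the distance gap these $n$ positions are pairwise distinct and no faulty robot lies on $\Gamma$ (Lemma~\ref{lem:nointersection}).

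It remains to show that a robot inspecting $\mathcal{C}_1$ can name the faulty set. Because $n \ge 2f+1$ we have $n-f \ge f+1$, and a pattern of the type corresponding to $f$ is determined by $f$ of its points (cf.\ the proof of Theorem~\ref{thm:2f1}); so $\Gamma$ carries at least $f+1$ robots of $\mathcal{C}_1$ at the prescribed uniform spacing, $\mathcal{C}_1$ is a Type~I configuration with existing pattern $\Gamma$, and the faulty robots are exactly the $f$ robots of $\mathcal{C}_1$ off $\Gamma$. To make this unambiguous I would exclude a competing type-$f$ pattern $\Gamma' \neq \Gamma$ that also carries $\ge f+1$ robots at uniform spacing: such a $\Gamma'$ meets $\Gamma$ in only a bounded number of points (\S\ref{sec:lowerbound}), so it can reach the threshold $f+1$ only by collecting essentially all $f$ faulty robots plus a constant number of good robots; but the faulty robots all lie in the radius-$d/2$ disk about $O$ while the uniform samples of $\Gamma'$ coinciding with good robots are forced to distance $\ge d$ from $O$, and the uniform-spacing constraint then leaves no room for such a $\Gamma'$. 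Running this over $f \in \{2,3,4,5\}$ completes the argument, all of it inside a single round.

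The step I expect to be the real obstacle is this last one --- formally ruling out the spurious type-$f$ pattern. The common-points bound only narrows it down to ``all faulty robots plus $O(1)$ good robots,'' and closing that gap needs the uniform-spacing invariant together with the quantitative separation between the radius-$d/2$ disk holding the faulty robots and the curve $\Gamma$ (equivalently, Lemmas~\ref{lem:distinct} and~\ref{lem:nointersection} re-applied to $\mathcal{C}_1$ viewed as an intermediate configuration); making this airtight for the parabola (which covers $f=4$ and $f=5$) is where the care will be needed.
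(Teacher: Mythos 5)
Your core argument is the paper's own: the target conic is placed strictly outside the smallest enclosing circle of $\mathcal{C}_0$ (line, circle, and parabola all lie at distance $\geq d$ from $O$), so every non-faulty robot is forced to move onto a uniform pattern point while the $f$ crashed robots remain inside the radius-$d/2$ disk, and this spatial gap in $\mathcal{C}_1$ is exactly how the paper identifies the faulty robots in one round. The step you flag as the real obstacle---formally excluding a competing conic of the same type carrying $\geq f+1$ uniformly spaced robots---is not part of the paper's proof of this lemma; the paper defers that disambiguation to Lemma~\ref{lem:unique}, where the quasi-uniform structure of $\mathcal{C}_1$ (faulty robots do not sit at uniform points) is invoked, so your admitted loose end goes beyond what the paper establishes here rather than falling short of it.
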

	\begin{proof}
		Since we need to determine all the faulty robots, we make sure that all the robots are required to move to form the target pattern. No point in the existing configuration coincides with the points in the target pattern. According to the algorithms in the previous section, we ensure that the robots move to a point outside the smallest enclosing circle of $\mathcal{C}_0$. So in $\mathcal{C}_1$, the robots which lie inside the pattern are faulty. For $f=2$, the faulty robots would be in-between the first and the second robot of the existing line. For $f=3$, the faulty robots would lie inside the circle or ellipse. For $f=4$ and $f=5$, the robots would lie on the side of the parabola with the focus. 
	\end{proof}
	\begin{lemma}\label{lem:initinterfault}
		It takes one round to determine all the faulty robots for an asymmetric Type I configuration for $f \in \{2,3,4,5\}$.
	\end{lemma}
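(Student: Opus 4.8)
The plan is to run the same argument as for a Type~O configuration (Lemma~\ref{lem:initarbitfault}), with ``outside the smallest enclosing circle'' replaced by ``on the uniform grid of the target conic.'' First I would unpack the hypothesis. In an asymmetric Type~I configuration $\mathcal{C}$, exactly $n-f$ robots lie on a conic $\wp$ of the type corresponding to $f$, and the other $f$ robots lie off $\wp$. Because the conic $\wp$ has $f$ degrees of freedom for each $f\in\{2,3,4,5\}$ (a line for $f=2$, a circle for $f=3$, a parabola for $f=4$, a general conic for $f=5$) and $n\ge 2f+1$, the $n-f\ge f+1$ robots on $\wp$ pin it down, so $\wp$ --- and hence the complementary set of $f$ ``suspects'' --- is read off from the configuration alone. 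By Lemma~\ref{lem:orderable} the configuration is orderable, so the $f$ suspects determine a \emph{unique} target conic $\wp'$ of the required type, the ordering being used to break ties when several such conics pass through the suspects (as for a parabola through four points).

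Next I would apply the movement rule of \S\ref{sec:algos}: every robot is dispatched to one of the $n$ points (or, after the fallback spacing $u'=\mathcal{P}/(n+1)$, one of $n+1$ points) placed at uniform distance along $\wp'$, selected by the ``non-overlapping uniform points'' procedure so that no target point coincides with a current robot position or with a point of $\wp\cap\wp'$. By Lemma~\ref{lem:distinct} the target points are pairwise distinct, and by Lemma~\ref{lem:nointersection} they are disjoint from $\mathcal{C}$. The uniform spacing is the crux of the argument: it is a property no stray point can satisfy by accident, so ``sitting on the uniform grid of $\wp'$'' is a certificate of being non-faulty.

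The one-round claim then follows. After one synchronous round, every non-faulty robot has carried out its move and thus sits at a uniform point of $\wp'$, whereas every faulty robot has not moved and thus sits at its old position in $\mathcal{C}$, which by the construction above is not a uniform point of $\wp'$. Hence, in the resulting configuration, the robots that are \emph{not} on the uniform grid of $\wp'$ are exactly the $f$ faulty robots, so one round suffices to determine all of them. (If the suspects were in fact the faulty robots, $\wp'$ already passes through them and the same round produces the terminal configuration; the weaker ``determined'' statement is what the framework of \S\ref{sec:correctness} needs.) The four values of $f$ differ only in which conic $\wp'$ is and in whether it is open or closed, and the grid is placed accordingly as described in \S\ref{sec:algos}.

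I expect the main obstacle to be this ``by construction'' step: one must be sure the non-overlapping-points routine always succeeds, i.e.\ that among the admissible offsets, or the two spacings $u=\mathcal{P}/n$ and $u'=\mathcal{P}/(n+1)$, there is always a \emph{deterministic} choice avoiding the $O(n)$ forbidden positions --- the current robot positions together with the at most $f$ points of $\wp\cap\wp'$ --- without re-introducing a symmetry. This has to be checked separately for the open conics (line, parabola, hyperbola) and the closed conics (circle, ellipse), and for $f=5$ for every transition among parabola, hyperbola and ellipse, which is why those cases are deferred to the appendix. A secondary point to settle is that the ordering-based tie-break selecting $\wp'$ is well defined whenever the conic through the $f$ suspects is not unique (the $f=4$ parabola, and the $f=5$ subcases), so that all robots agree on the same target.
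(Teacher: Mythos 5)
Your proposal is correct and takes essentially the same route as the paper: ensure that no destination point coincides with a current robot position or with an intersection point of the current and target patterns, so that after one round every non-faulty robot sits at a uniform point of the target conic while the crashed robots remain off the uniform grid (even if they happen to lie on the target pattern) and are thereby identified. The only cosmetic difference is the tie-break when the conic through the $f$ suspects is not unique: the paper selects, e.g., the parabola with the larger latus rectum (Lemma~\ref{lem:unique}) rather than invoking the ordering, which does not affect the argument.
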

	\begin{proof}
		Similar to Lemma~\ref{lem:initarbitfault}, we need to ensure that all the robots move. So, the initial intermediate configuration $\mathcal{C}_0$, should not have points overlapping with the target pattern. According to the algorithms, we ensure that the target pattern points and the current configuration points do not have common points. We also ensure that the intersection points of the current configuration pattern and the target pattern are not present in the set of destination points. Even if the robots which are faulty already part of the target pattern, they do not lie on the uniform points. Hence we can determine which robots are faulty in the resulting~configuration.
	\end{proof}
	
	\begin{lemma}\label{lem:unique}
		The target pattern passing through the faulty robots in $\mathcal{C}_1$ can be uniquely determined.
	\end{lemma}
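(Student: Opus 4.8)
The plan is to argue by cases on the number $f$ of faulty robots, since the type of target conic differs in each case and the uniqueness claim reduces to a classical interpolation fact in all cases but one. The only properties of $\mathcal{C}_1$ I will use are those guaranteed by the model: by Assumptions~1 and~3 the $f$ faulty robots sit at $f$ distinct points, and by Assumption~4 they are the vertices of a convex polygon, so in particular no three of them are collinear.

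For $f=2$ the target pattern is the line through the two faulty robots, which is unique because the two points are distinct. For $f=3$ the target pattern is a circle through the three faulty robots; these are the vertices of a triangle, hence not collinear, and three non-collinear points determine a unique circle (its center is the intersection of two perpendicular bisectors, which is well defined precisely because the points are not collinear). For $f=5$ the target pattern is a conic through five points; substituting the five points into the general second-degree equation $a_1x^2+a_2y^2+a_3xy+a_4x+a_5y+a_6=0$ gives a homogeneous linear system in $(a_1,\dots,a_6)$ whose coefficient matrix has rank $5$ because no three of the points are collinear, so the conic is unique up to scaling; moreover it cannot degenerate into a pair of lines or a double line, as that would again force three of the five points to be collinear, so it is a genuine ellipse, parabola or hyperbola, matching the stated objective.

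The delicate case is $f=4$, where the target is a parabola through four points and uniqueness is genuinely at issue. The conics through the four faulty robots form a pencil $\lambda Q_1+\mu Q_2$ of quadratic forms; a member of the pencil is a parabola exactly when the discriminant of its quadratic part vanishes, and this is a single quadratic equation in $(\lambda:\mu)$ — it cannot be identically zero, since that would force the three pairs of opposite sides of the complete quadrilateral to be three pairs of parallel lines, impossible for four points in convex position. Hence at most two parabolas pass through the four faulty robots. For the convex configurations on which a parabola exists at all, I would then show that when there are two of them their axis directions are distinct (two parabolas with a common axis direction meet in at most two points, impossible here), and finally invoke the common coordinate frame supplied by the ordering of Lemma~\ref{lem:orderable} to select the parabola whose axis direction makes the smaller angle with the agreed positive $x$-axis; since the two candidate directions differ, this yields a unique, symmetry-independent choice. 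Convex configurations admitting no parabola at all — for instance the vertices of a square — are exactly the impossibility instances deferred to \S\ref{sec:discussion}.

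I expect $f=4$ to be the main obstacle. For $f\in\{2,3,5\}$ uniqueness is textbook linear algebra or elementary geometry, with the convex-polygon assumption supplying precisely the non-collinearity needed. For $f=4$ one must additionally (i) bound the number of parabolas through the four faulty robots, (ii) verify that the axis directions of two coexisting parabolas are distinct, so that the tie-break is well posed, and (iii) confirm that the tie-break depends only on data (the ordering) that every non-faulty robot computes identically, so that no extra agreement assumption is smuggled in. Step~(ii) and the careful bookkeeping of the degenerate sub-cases are where the real work lies; the rest is routine.
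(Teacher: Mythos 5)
Your case split and the classical facts you invoke largely mirror the paper's own proof: for $f=2,3$ the line and circle through the faulty points are unique, for $f=5$ five points in convex position determine a unique non-degenerate conic, and for $f=4$ the paper likewise acknowledges that two ``conjugate'' parabolas may pass through the four points. Your pencil argument bounding the number of parabolas by two (the discriminant of the quadratic part is a quadratic on the pencil that cannot vanish identically, since the diagonals of a convex quadrilateral are not parallel), and your rank-$5$ argument for $f=5$, are in fact more rigorous than the one-line claims in the paper, and your observation that two parabolas sharing an axis direction meet in at most two points correctly shows the two candidates have distinct axes.

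Where you genuinely diverge---and where there is a gap---is the tie-break for $f=4$. The paper resolves the ambiguity intrinsically: it selects the parabola with the \emph{larger latus rectum}, a criterion depending only on the four faulty positions and on no ordering or coordinate frame. You instead select the parabola whose axis makes the smaller angle with a positive $x$-axis agreed upon through the ordering of Lemma~\ref{lem:orderable}. This is problematic twice over. First, that ordering is only guaranteed for asymmetric configurations, whereas $\mathcal{C}_1$ may well be a symmetric Type I configuration (the algorithm explicitly handles reflective and rotational symmetric Type I cases), and the paper itself notes that under symmetry different robots can compute different orderings---so your ``agreed'' frame need not be agreed upon by all non-faulty robots. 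Second, even with a common frame the rule can be ill-posed: the two axis directions, though distinct, may make equal angles with the chosen axis, so ``smaller angle'' does not always select a unique parabola. Both defects are repairable by falling back to an intrinsic criterion such as the paper's, but as written your $f=4$ selection is not well-defined in all cases. A smaller remark: the paper's proof of this lemma also argues that the faulty robots can be read off $\mathcal{C}_1$ in the first place (they are exactly the robots not sitting at uniform pattern points, by Lemma~\ref{lem:nointersection} and quasi-uniformity); you take the faulty positions as given, which is defensible in light of Lemmas~\ref{lem:initarbitfault} and~\ref{lem:initinterfault}, but should be said explicitly since the lemma is about what the robots can determine from $\mathcal{C}_1$.
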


	\begin{proof}
		The pattern is determined uniquely for a given value of $f$. For $f=2$ and $3$, the line and circle passing through the points are unique. For $f=4$, there can be two conjugate parabolas passing through four points. In this case, the parabola with the larger latus rectum is chosen as the target pattern. 
		For $f=5$, the target pattern is uniquely determined by the five points to be a parabola, hyperbola or an ellipse. Since we assume the faulty robots to form a convex polygon, they can only occupy positions on one side of the hyperbola.
		From Lemma~\ref{lem:nointersection}, we know that the destination points do not have a common point with the points in the previous configuration. From the quasi-uniform configuration, we can determine the robots which are not present at a uniform point. Hence we can determine the faulty robots in $\mathcal{C}_1$ and the corresponding pattern.
	\end{proof}

	\subsection{Termination}
	We can now show that the algorithm terminates, and we can determine the faulty robots in the terminal configuration.
	Combining Lemma~\ref{lem:initarbitfault},~\ref{lem:initinterfault} and~\ref{lem:unique} yields:
		\begin{theorem}\label{thm:twostep}
		Starting from any initial asymmetric configuration, the algorithm terminates in at most two rounds.
	\end{theorem}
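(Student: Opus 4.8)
The plan is to argue by a case analysis on the type of the initial configuration $\mathcal{C}_0$ and, in every case, to show that after the first round the algorithm reaches a configuration $\mathcal{C}_1$ in which the $f$ faulty robots are identifiable together with a unique target conic, so that the second round places every non-faulty robot onto that conic and yields a terminal configuration. If $\mathcal{C}_0$ is already terminal there is nothing to prove. Otherwise, by the definitions in \S\ref{subsec:algo-prem}, $\mathcal{C}_0$ is either a Type O or a Type I configuration, and since $\mathcal{C}_0$ is asymmetric it is orderable by Lemma~\ref{lem:orderable}, so the target pattern the algorithm selects is well defined.

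First I would treat the \textbf{Type O} case. Here the algorithm scales the target pattern (line, circle, or parabola, according to $f$) so that it lies strictly outside the smallest enclosing circle of $\mathcal{C}_0$, and this scaled pattern is uniquely determined because $\mathcal{C}_0$ is asymmetric. Applying Lemma~\ref{lem:initarbitfault}, after one round the configuration $\mathcal{C}_1$ has all $n-f$ non-faulty robots at uniformly spaced points of that pattern, while the $f$ faulty robots remain in the ``inside'' region (between the two line endpoints, inside the circle, or on the focus side of the parabola) and hence off the uniform points; thus $\mathcal{C}_1$ is a quasi-uniform configuration in which the faulty robots are exactly those not at a uniform position. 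In the second round, by Lemma~\ref{lem:unique} the conic of type $f$ through the $f$ identified faulty robots is unique, by Lemma~\ref{lem:nointersection} its set of destination points is disjoint from $\mathcal{C}_1$, and by Lemma~\ref{lem:distinct} these destinations are pairwise distinct; the $n-f$ non-faulty robots move onto them and the $f$ faulty robots already lie on the conic, so $\mathcal{C}_2$ is terminal.

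Next the \textbf{Type I} case, which also covers every transitional configuration and, for $f=5$, certain initial ones. If $\mathcal{C}_0$ is an asymmetric Type I configuration, Lemma~\ref{lem:initinterfault} gives the same conclusion after one round: the uniform points are chosen (with spacing $u=\mathcal{P}/n$, or $u'=\mathcal{P}/(n+1)$ on overlap) to avoid both the intersection points of the current and target conics and the potentially faulty robot positions, so in $\mathcal{C}_1$ the faulty robots again fail to occupy a uniform position and we land in a quasi-uniform configuration. If $\mathcal{C}_0$ is a symmetric Type I configuration (reflective or rotational), the tie-breaking rules of \S\ref{sec:algos} — robots on the axis of symmetry occupying only one of their two mirrored destinations, and the prescribed choice of side for a rotationally symmetric line or concentric-circle pair — guarantee that $\mathcal{C}_1$ is not fully symmetric while preserving the property that the faulty robots lie off the uniform points. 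In all sub-cases we may then invoke Lemma~\ref{lem:unique} and finish the second round exactly as in the Type O case; the particular ellipse/parabola/hyperbola transitions for $f=5$ are the ones deferred to the appendix. Combining Lemmas~\ref{lem:initarbitfault},~\ref{lem:initinterfault}, and~\ref{lem:unique} over all cases then gives termination in at most two rounds.

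I expect the main obstacle to be the symmetric Type I sub-cases. The delicate point is twofold. We must be sure the first round actually \emph{breaks} the symmetry, since otherwise $\mathcal{C}_1$ would not be orderable and the tie-breaking inside Lemma~\ref{lem:unique} (needed when the type-$f$ conic through the faulty points is not forced, e.g.\ the conjugate parabolas for $f=4$, or which branch of a hyperbola for $f=5$) would have no consistent way to choose; and we must check that the refinement from $u$ to $u'$ used to dodge the intersection of the current and target conics never re-introduces a coincidence between a destination point and an existing robot, nor collapses two destinations together — i.e.\ that Lemmas~\ref{lem:distinct} and~\ref{lem:nointersection} survive the refinement. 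Once those invariants are in place, together with the convex-polygon assumption on the faulty robots (which is precisely what makes the type-$f$ conic through them unique), the remaining argument is bookkeeping on top of the three cited lemmas.
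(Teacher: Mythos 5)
Your proposal is correct and follows essentially the same route as the paper, which proves Theorem~\ref{thm:twostep} simply by combining Lemma~\ref{lem:initarbitfault}, Lemma~\ref{lem:initinterfault}, and Lemma~\ref{lem:unique} (with Lemmas~\ref{lem:distinct} and~\ref{lem:nointersection} supporting the second round). Your version merely spells out the Type~O / Type~I case analysis and the symmetric transitional sub-cases that the paper leaves implicit in its one-line proof, so no further changes are needed.
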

	Since the algorithm does not do anything for a terminal configuration, we cannot determine the faulty robots if the initial configuration is a terminal configuration. 
	Moreover, in a terminal configuration, our algorithm designs result in the following distribution of robots on the plane starting from a configuration other than the terminal configuration. 
	\begin{corollary}\label{cor:uniformterminal}
		Starting from a configuration other than the terminal configuration, the non-faulty robots are at uniform pattern points in the terminal configuration.
	\end{corollary}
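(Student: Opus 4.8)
The plan is to track the non-faulty robots through the single round that actually produces the terminal configuration (for the conic cases $f\in\{2,3,4,5\}$). By Theorem~\ref{thm:twostep}, from a non-terminal asymmetric configuration $\mathcal{C}_0$ the algorithm halts after one or two rounds, so the terminal configuration is $\mathcal{C}_1$ or $\mathcal{C}_2$; since $\mathcal{C}_0$ is not terminal it is Type~O or Type~I. The backbone is the following invariant: \emph{in every round the algorithm executes from a non-terminal configuration with target conic $\gamma$, each of the $n$ robots is assigned as its destination a point of one fixed uniformly spaced point set on $\gamma$} --- with arc-length spacing $u=\mathcal{P}/n$, or $u'=\mathcal{P}/(n+1)$ in the overlap case --- \emph{and these destinations are pairwise distinct} (Lemma~\ref{lem:distinct}).

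To establish the invariant I would inspect the two destination rules of \S\ref{sec:algos}. For a Type~O configuration, $\gamma$ is the scaled-out line, circle, or parabola through the point $B$, and the destinations are exactly the points placed consecutively at arc-length $u=\mathcal{P}/n$ along $\gamma$ (starting at $u/2$ from an endpoint when $\gamma$ is open), which is a uniform point set by definition. For a Type~I configuration, $\gamma$ is the conic determined by the $f$ off-pattern robots --- unique by Lemma~\ref{lem:unique} --- and the destinations are again the $n$ (or $n+1$) points at uniform arc-length along $\gamma$, assigned via the ordering of Lemma~\ref{lem:orderable}; the ``choosing non-overlapping uniform points'' subroutine only translates this entire point set by one common arc-length, so the destinations still form a uniformly spaced point set on $\gamma$. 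In all cases the crashed robots ignore their destinations and stay put, while, by rigid movement and distinctness of destinations, every non-faulty robot reaches the uniform point assigned to it.

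It remains to reach the terminal configuration. If $\mathcal{C}_0$ is Type~I, the terminal configuration is $\mathcal{C}_1$ and the invariant already gives the claim. If $\mathcal{C}_0$ is Type~O, then after round~$1$ the $n-f$ non-faulty robots lie at uniform points on the far conic $\gamma_{\mathrm{big}}$ while the $f$ crashed robots stay inside the smallest enclosing circle of $\mathcal{C}_0$, which $\gamma_{\mathrm{big}}$ avoids; hence $\mathcal{C}_1$ is not terminal, and one checks it is a Type~I configuration, so round~$2$ applies the Type~I rule and, by the invariant, moves every non-faulty robot to a uniform point of the conic $\gamma$ it now targets. By Theorem~\ref{thm:twostep} no further round occurs. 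Since the resulting configuration is terminal, all $n$ robots lie on one common conic, which is necessarily this $\gamma$; hence in the terminal configuration every non-faulty robot sits at a point of the uniform point set on the final conic, as claimed. (This also shows why the hypothesis must exclude an already-terminal $\mathcal{C}_0$: then no round runs and no uniformity is forced.)

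The main obstacle I anticipate is the Type~O~$\to$~Type~I step: one must verify that the algorithm genuinely recognizes $\mathcal{C}_1$ as Type~I --- i.e., that exactly $n-f$ robots lie on a conic of the type admissible for $f$ (plausible since $n-f\ge f+1$ exceeds, except possibly at $f=4$, the number of points needed to pin down such a conic, so $\gamma_{\mathrm{big}}$ is the only candidate), that the remaining $f$ robots are exactly the crashed ones, and that they still form a convex polygon so that Lemma~\ref{lem:unique} applies --- together with checking that the translation in the non-overlapping-points subroutine keeps all destinations on a single uniform point set. Everything else is bookkeeping over the two destination rules plus an appeal to Lemma~\ref{lem:nointersection} to ensure each non-faulty robot actually vacates its current position.
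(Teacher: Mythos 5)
Your proposal rests on exactly the observation the paper's own proof uses: every destination the algorithm ever assigns is a uniform point on the target pattern, so in whichever round turns out to be the last, each non-faulty robot (which must move, by Lemma~\ref{lem:nointersection}) ends up at a uniform pattern point --- the paper states this invariant in two lines, and your round-by-round tracking and Type~O/Type~I case analysis are correct but additional detail beyond what the corollary needs. One minor slip in that extra detail: a Type~I initial configuration need not terminate already at $\mathcal{C}_1$ (the $f$ off-pattern robots used to fix the target may not all be faulty, cf.\ Lemma~\ref{lem:initinterfault}), but this does not harm your argument, since your invariant applies to whichever round is the final one.
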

	\begin{proof}
		The destinations are always at uniform points spread over the target pattern. So whenever a non-faulty robot moves, it ends up at a uniform pattern point. Note that the resulting configuration may not be uniform due to the faulty robots. The non-faulty robots occupy the uniform points in a quasi-uniform configuration. 
	\end{proof}
	From Lemma~\ref{lem:nointersection} and Corollary~\ref{cor:uniformterminal}, we have the following Corollary.
	\begin{corollary}
		The faulty robots can be determined from a terminal configuration unless it is the initial configuration.
	\end{corollary}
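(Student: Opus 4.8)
The plan is to combine the non-overlap property of Lemma~\ref{lem:nointersection} with the uniformity guarantee of Corollary~\ref{cor:uniformterminal}. Fix a terminal configuration $\mathcal{C}_t$ that is \emph{not} an initial configuration. Our algorithm acts only on Type~O and Type~I configurations, so $\mathcal{C}_t$ was produced one round earlier from some non-terminal configuration $\mathcal{C}'$ (Type~O or Type~I): in $\mathcal{C}'$ the robots agree on a target conic and on a destination set $\mathcal{C}''$ of uniformly spaced points along that conic, and $\mathcal{C}_t$ is the configuration obtained when every non-faulty robot moves to its assigned destination while the $f$ faulty robots stay put.

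First I would observe that in $\mathcal{C}_t$ every non-faulty robot lies at a uniform pattern point: this is precisely Corollary~\ref{cor:uniformterminal}, so the $n-f \ge f+1$ non-faulty robots occupy $n-f$ of the $m$ uniform positions ($n\le m\le 2n$) of the resulting quasi-uniform configuration. Second, I would show that no faulty robot lies at a uniform pattern point. Each faulty robot occupies in $\mathcal{C}_t$ the same position it held in $\mathcal{C}'$, i.e.\ a point of $\mathcal{C}'$. The conic supporting $\mathcal{C}_t$ is exactly the target conic computed in $\mathcal{C}'$ --- it passes through the faulty robots, and by the uniqueness of the pattern through the relevant number of points (Lemma~\ref{lem:unique}, with the tie-breaking conventions for the conjugate parabolas at $f=4$ and the convex-polygon assumption at $f=5$) this is the conic any robot reconstructs from $\mathcal{C}_t$. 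Hence the uniform pattern points of $\mathcal{C}_t$ are exactly the destination set $\mathcal{C}''$, and by Lemma~\ref{lem:nointersection} we have $\mathcal{C}'\cap\mathcal{C}''=\emptyset$, so none of the faulty positions (all in $\mathcal{C}'$) is a uniform pattern point. Consequently the faulty robots are precisely the robots of $\mathcal{C}_t$ that lie on the conic but off the uniform grid, and any robot can identify them: reconstruct the conic from the $n\ge 2f+1$ visible positions, recover the uniform spacing from the $n-f>f$ robots that do lie on the grid, and flag the at most $f$ robots that do not.

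It remains to note the converse: if the initial configuration is itself terminal, the algorithm never moves anyone, so no quasi-uniform structure is created and, as remarked just before the statement, the faulty robots cannot be distinguished from non-faulty ones. The main obstacle is the middle step --- making rigorous that a robot seeing only $\mathcal{C}_t$ recovers the same conic and the same uniform grid that were used to produce $\mathcal{C}_t$. This hinges on the uniqueness of the pattern for the given $f$ (already established) and on the quasi-uniform bound $n\le m\le 2n$, which ensures that the spacing $u$ (or $u'$, when the grid was refined to avoid intersection points) and the number of uniform positions are determined by the $n-f\ge f+1$ on-grid robots, so that the off-grid faulty robots are unambiguously singled out.
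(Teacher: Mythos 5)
Your proposal is correct and follows essentially the same route as the paper, which derives this corollary directly from Lemma~\ref{lem:nointersection} (faulty robots, having not moved, cannot occupy destination points) together with Corollary~\ref{cor:uniformterminal} (non-faulty robots sit at uniform pattern points), so the off-grid robots are exactly the faulty ones. Your write-up merely fills in the reconstruction details and the initial-configuration caveat that the paper states without proof.
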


	\section{Discussion}\label{sec:discussion}
	
	In the following, we show how to relax our model and extend the previous results in several directions. In particular, we extend the behavior of the algorithm in the absence of the assumption considered in \S\ref{sec:prelim}. We show that with small modifications to the algorithm, we can subvert some assumptions.

			\smallskip
			\noindent\textbf{Knowing the number of faults.}
			We extend the definition of Type I configuration to include configurations where at most $f$ robots are not in the pattern in the current configuration. As we need exactly $f$ robots to determine the target pattern, if $f' < f$ are not in the pattern, we choose a target pattern passing through those $f'$ and the first $f-f'$ robots in the ordering to set the~pattern. 
			
			\smallskip
			\noindent\textbf{Initial configuration with reflective symmetry.}
			For a configuration with a single line of symmetry, we can always follow the strategies described for Type I symmetric configurations in the algorithms from \S\ref{sec:algos}. The robots on the line of symmetry have two destinations on either side of the line of symmetry. According to their local orientation, they choose one of the destinations.

			\smallskip
			\noindent\textbf{Lower order patterns for higher number of faults.}
			We add special cases if the robots are collinear (resp.\ co-circular)	for $f \in \{3, 4, 5\}$ (resp.\ $f\in\{4,5\}$). In this case, the robots form a line (resp.\ circle). If the initial configuration is this situation, then it is impossible to determine the faulty robots. Hence the configuration in the next step becomes an arbitrary configuration. We thus need three steps to achieve pattern formation instead of two.

	\section{Conclusion}\label{sec:conclusion}
	
	This paper initiated the study of distributed algorithms for pattern formation with faulty robots.
	In particular, we presented an algorithmic framework that allows solving many basic formation problems in at most two rounds, which is optimal given the lower bound also presented in this paper.
	We regard our work as a first step and believe it opens several interesting avenues for future research. In particular, it will be interesting to study pattern formation problems for 
	more advanced robots under failures, as well as randomized algorithms.
	It will also be interesting to generalize our failure model, e.g., to support transient crash faults and byzantine faults.


	
	\bibliographystyle{splncs04}
	\bibliography{bib}

\begin{thebibliography}{10}
\providecommand{\url}[1]{\texttt{#1}}
\providecommand{\urlprefix}{URL }
\providecommand{\doi}[1]{https://doi.org/#1}

\bibitem{AgmonP06}
Agmon, N., Peleg, D.: Fault-tolerant gathering algorithms for autonomous mobile
  robots. {SIAM} J. Comput.  \textbf{36}(1),  56--82 (2006)

\bibitem{AugerBCTU13}
Auger, C., Bouzid, Z., Courtieu, P., Tixeuil, S., Urbain, X.: Certified
  impossibility results for byzantine-tolerant mobile robots. In:
  Stabilization, Safety, and Security of Distributed Systems - 15th Intl.
  Symposium, {SSS} 2013, Osaka, Japan, November 13-16, 2013. Proceedings. pp.
  178--190 (2013)

\bibitem{BhagatM17}
Bhagat, S., Mukhopadhyaya, K.: Fault-tolerant gathering of semi-synchronous
  robots. In: Proceedings of the 18th International Conference on Distributed
  Computing and Networking, Hyderabad, India, January 5-7, 2017. p.~6 (2017)

\bibitem{Bouzid0T13}
Bouzid, Z., Das, S., Tixeuil, S.: Gathering of mobile robots tolerating
  multiple crash faults. In: {IEEE} 33rd Intl. Conference on Distributed
  Computing Systems, {ICDCS} 2013, 8-11 July, 2013, Philadelphia, Pennsylvania,
  {USA}. pp. 337--346 (2013)

\bibitem{BramasT15}
Bramas, Q., Tixeuil, S.: Wait-free gathering without chirality. In: SIROCCO
  2015, Montserrat, Spain, July 14-16, 2015, Post-Proceedings. pp. 313--327
  (2015)

\bibitem{BramasT16}
Bramas, Q., Tixeuil, S.: Brief announcement: Probabilistic asynchronous
  arbitrary pattern formation. In: Proceedings of the 2016 {ACM} Symposium on
  Principles of Distributed Computing, {PODC} 2016, Chicago, IL, USA, July
  25-28, 2016. pp. 443--445 (2016)

\bibitem{selforg}
Camazine, S., Deneubourg, J.L., Franks, N.R., Sneyd, J., Bonabeau, E.,
  Theraula, G.: Self-organization in biological systems. Princeton university
  press (2003)

\bibitem{ChaudhuriM15}
Chaudhuri, S.G., Mukhopadhyaya, K.: Leader election and gathering for
  asynchronous fat robots without common chirality. J. Discrete Algorithms
  \textbf{33},  171--192 (2015)

\bibitem{CieliebakFPS12}
Cieliebak, M., Flocchini, P., Prencipe, G., Santoro, N.: Distributed computing
  by mobile robots: Gathering. {SIAM} J. Comput.  \textbf{41}(4),  829--879
  (2012)

\bibitem{CohenP04}
Cohen, R., Peleg, D.: Robot convergence via center-of-gravity algorithms. In:
  Proc. SIROCCO. pp. 79--88 (2004)

\bibitem{Cohen2005}
Cohen, R., Peleg, D.: Convergence properties of the gravitational algorithm in
  asynchronous robot systems. SIAM J. Comput.  \textbf{34}(6),  1516--1528 (Jun
  2005)

\bibitem{DBLP:journals/access/DasFPS20}
Das, S., Flocchini, P., Prencipe, G., Santoro, N.: Forming sequences of
  patterns with luminous robots. {IEEE} Access  \textbf{8},  90577--90597
  (2020)

\bibitem{0001FSY15}
Das, S., Flocchini, P., Santoro, N., Yamashita, M.: Forming sequences of
  geometric patterns with oblivious mobile robots. Distributed Computing
  \textbf{28}(2),  131--145 (2015)

\bibitem{DefagoP0MPP16}
D{\'{e}}fago, X., Potop{-}Butucaru, M.G., Cl{\'{e}}ment, J., Messika, S.,
  Parv{\'{e}}dy, P.R.: Fault and byzantine tolerant self-stabilizing mobile
  robots gathering - feasibility study -. CoRR  \textbf{abs/1602.05546} (2016)

\bibitem{2012Flocchini}
Flocchini, P., Prencipe, G., Santoro, N.: Distributed Computing by Oblivious
  Mobile Robots. Synthesis Lectures on Distributed Computing Theory, Morgan
  {\&} Claypool Publishers (2012)

\bibitem{FlocchiniPSV17}
Flocchini, P., Prencipe, G., Santoro, N., Viglietta, G.: Distributed computing
  by mobile robots: uniform circle formation. Distributed Comput.
  \textbf{30}(6),  413--457 (2017)

\bibitem{FlocchiniPSW05}
Flocchini, P., Prencipe, G., Santoro, N., Widmayer, P.: Gathering of
  asynchronous robots with limited visibility. Theor. Comput. Sci.
  \textbf{337}(1-3),  147--168 (2005)

\bibitem{FlocchiniPSW08}
Flocchini, P., Prencipe, G., Santoro, N., Widmayer, P.: Arbitrary pattern
  formation by asynchronous, anonymous, oblivious robots. Theor. Comput. Sci.
  \textbf{407}(1-3),  412--447 (2008)

\bibitem{FujinagaYOKY15}
Fujinaga, N., Yamauchi, Y., Ono, H., Kijima, S., Yamashita, M.: Pattern
  formation by oblivious asynchronous mobile robots. {SIAM} J. Comput.
  \textbf{44}(3),  740--785 (2015)

\bibitem{PattanayakMRM19}
Pattanayak, D., Mondal, K., Ramesh, H., Mandal, P.S.: Gathering of mobile
  robots with weak multiplicity detection in presence of crash-faults. J.
  Parallel Distrib. Comput.  \textbf{123},  145--155 (2019)

\bibitem{Prencipe2007}
Prencipe, G.: Impossibility of gathering by a set of autonomous mobile robots.
  Theor. Comput. Sci.  \textbf{384}(2-3),  222--231 (Oct 2007)

\bibitem{SuzukiY99}
Suzuki, I., Yamashita, M.: Distributed anonymous mobile robots: Formation of
  geometric patterns. {SIAM} J. Comput.  \textbf{28}(4),  1347--1363 (1999)

\bibitem{TomitaYKY17}
Tomita, Y., Yamauchi, Y., Kijima, S., Yamashita, M.: Plane formation by
  synchronous mobile robots without chirality. In: 21st International
  Conference on Principles of Distributed Systems, {OPODIS} 2017, Lisbon,
  Portugal, December 18-20, 2017. pp. 13:1--13:17 (2017)

\bibitem{YamashitaS10}
Yamashita, M., Suzuki, I.: Characterizing geometric patterns formable by
  oblivious anonymous mobile robots. Theor. Comput. Sci.  \textbf{411}(26-28),
  2433--2453 (2010)

\bibitem{YamauchiUKY17}
Yamauchi, Y., Uehara, T., Kijima, S., Yamashita, M.: Plane formation by
  synchronous mobile robots in the three-dimensional euclidean space. J. {ACM}
  \textbf{64}(3),  16:1--16:43 (2017)

\bibitem{YamauchiY13}
Yamauchi, Y., Yamashita, M.: Pattern formation by mobile robots with limited
  visibility. In: Structural Information and Communication Complexity - 20th
  International Colloquium, {SIROCCO} 2013, Ischia, Italy, July 1-3, 2013,
  Revised Selected Papers. pp. 201--212 (2013)

\end{thebibliography}

	\section*{Appendix}

	\subsection{Deferred Figures}
	
	In 
	the following 
	we show the transitions for $f=5$. In all the figures, the target pattern is denoted by dashed curves and the pattern of the current configuration is denoted by solid curves. The current robot positions are represented by filled circles and target positions are denoted by empty circles. Fig.~\ref{fig:asymTypeIf5AppendixParabola} shows the transition from parabola to ellipse, parabola and hyperbola for Asymmetric Type I configurations.
	\begin{figure}[!h!t]
		\centering
		\includegraphics[width=0.3\linewidth]{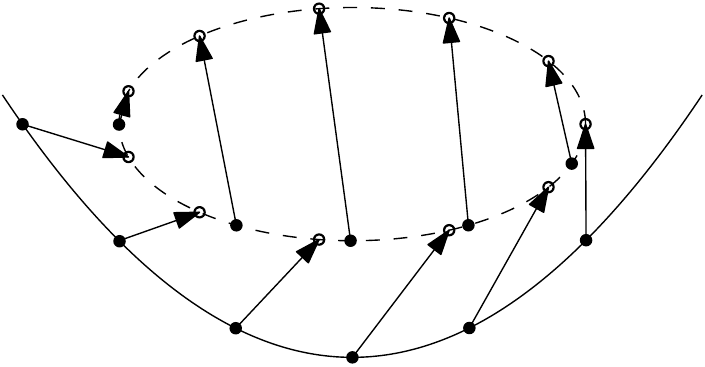}
		\includegraphics[width=0.3\linewidth]{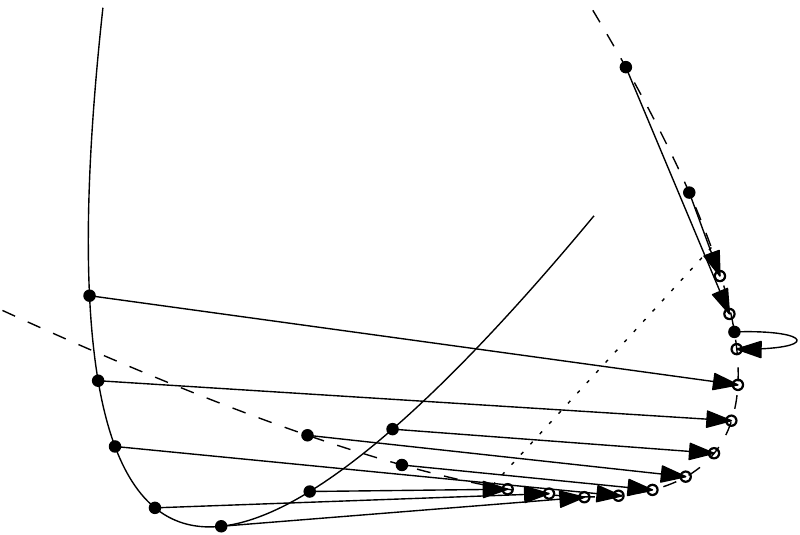}
		\includegraphics[width=0.3\linewidth]{figParabolaToHyperbola.pdf}
		\caption{Asymmetric Type I for $f=5$ starting from parabola}
		\label{fig:asymTypeIf5AppendixParabola}
	\end{figure}

	Fig.~\ref{fig:asymTypeIf5AppendixEllipse} shows the transition from ellipse to ellipse, parabola and hyperbola for Asymmetric Type I configurations.
	\begin{figure}[!h!t]
		\centering
		\includegraphics[width=0.3\linewidth]{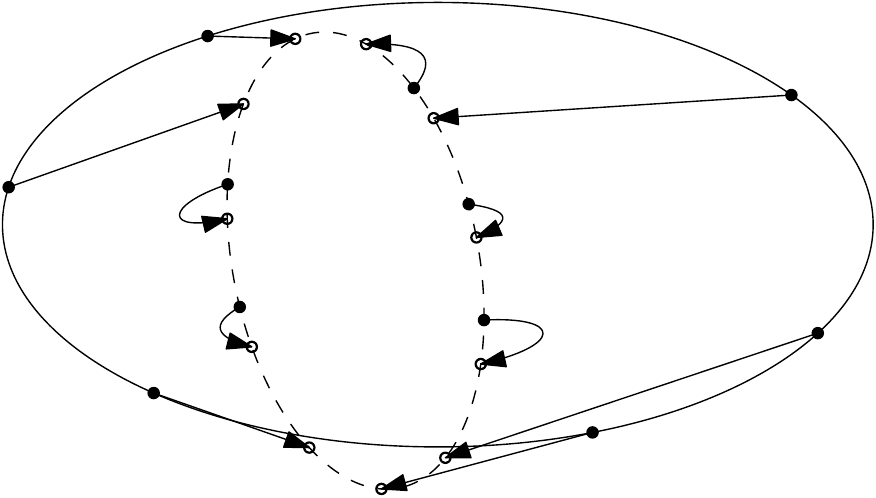}
		\includegraphics[width=0.3\linewidth]{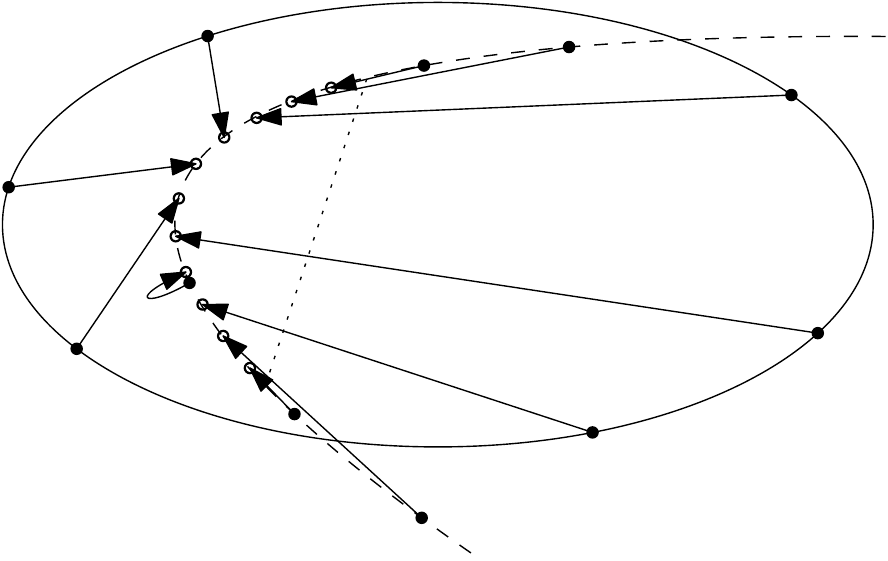}
		\includegraphics[width=0.3\linewidth]{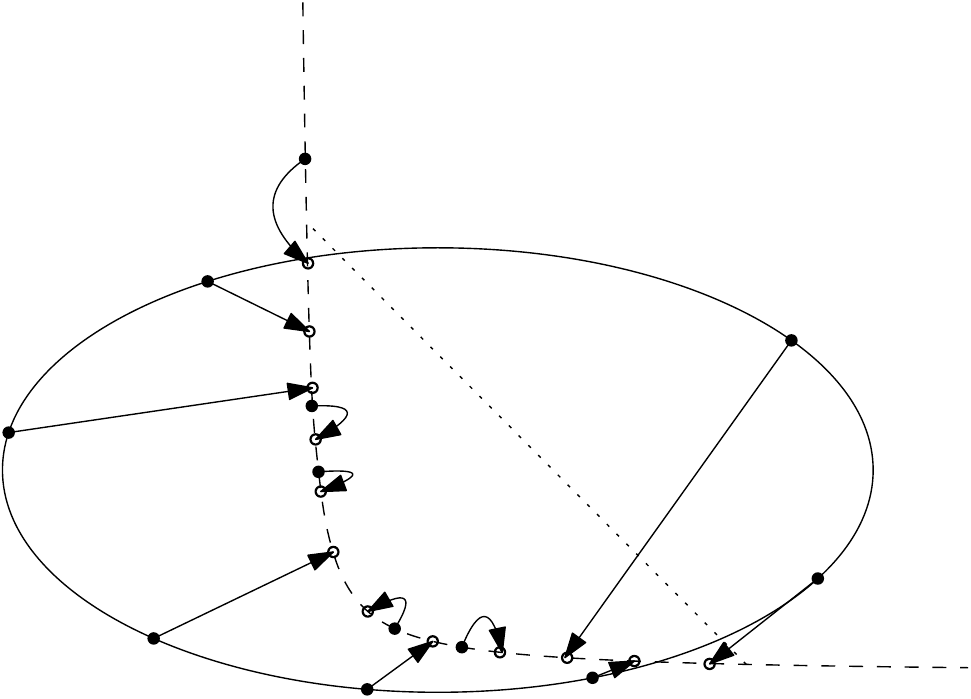}
		\caption{Asymmetric Type I for $f=5$ starting from ellipse}
		\label{fig:asymTypeIf5AppendixEllipse}
	\end{figure}

	\begin{figure}[!h!t]
		\centering
		\includegraphics[width=0.3\linewidth]{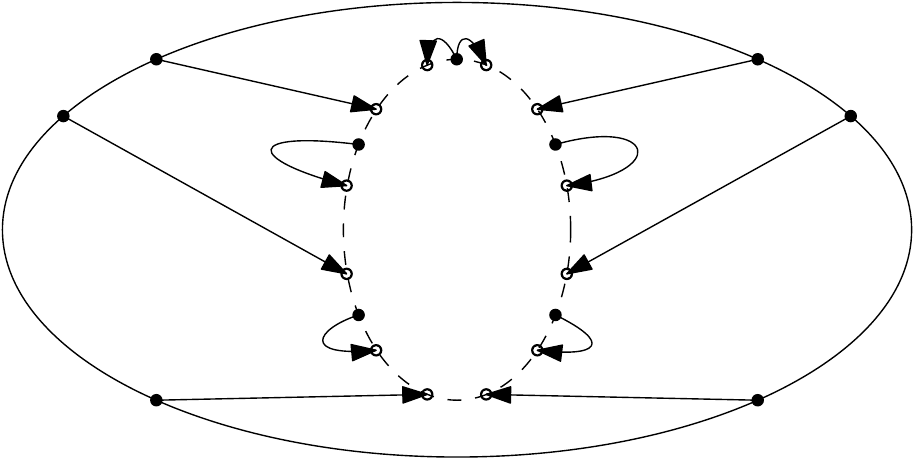}
		\includegraphics[width=0.3\linewidth]{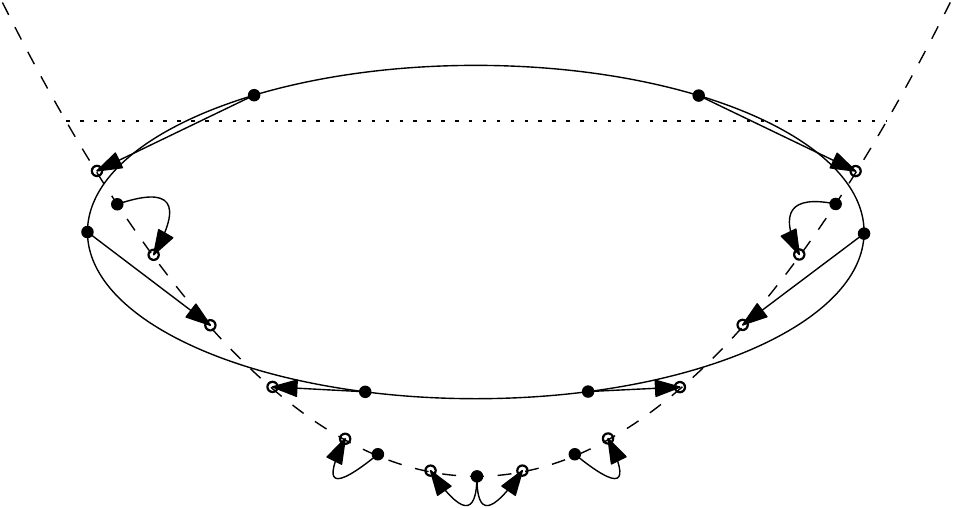}
		\caption{Reflective Symmetric Type I for $f=5$ starting from ellipse}
		\label{fig:SymTypeIf5AppendixEllipse}
	\end{figure}
	%
%
	\begin{figure}[!h!t]
		\centering
		\includegraphics[width=0.3\linewidth]{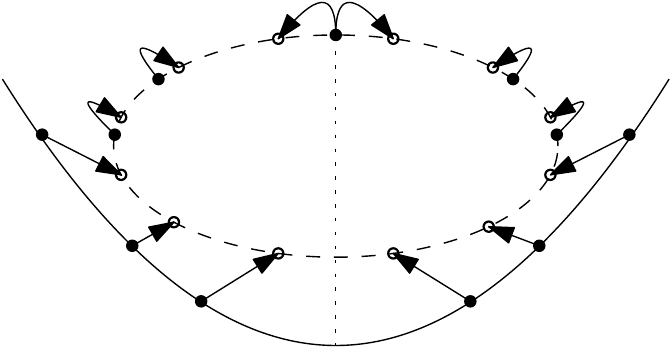}
		\includegraphics[width=0.3\linewidth]{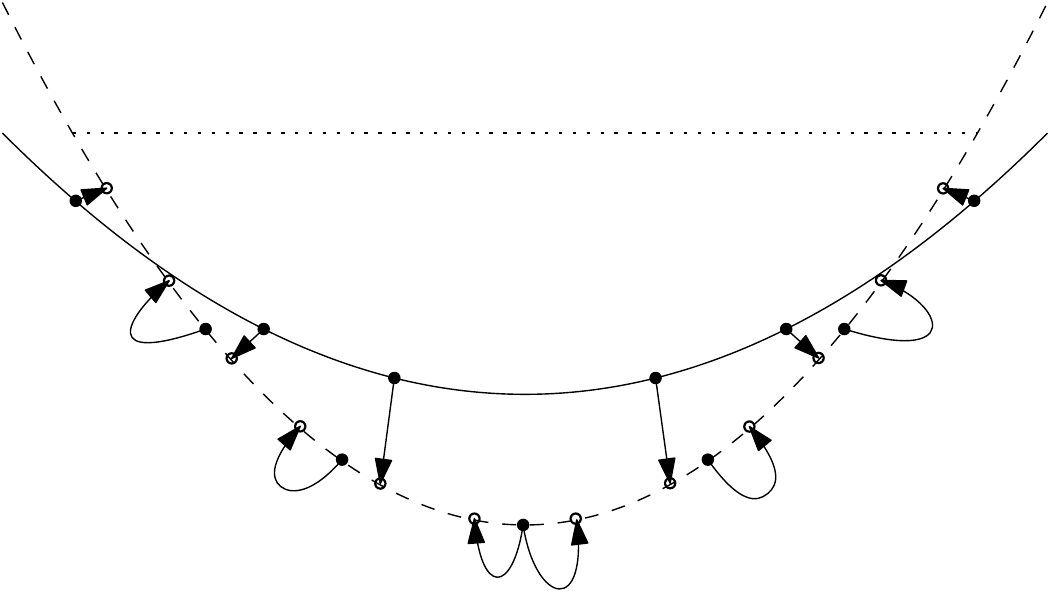}
		\caption{Reflective Symmetric Type I for $f=5$ starting from parabola}
		\label{fig:SymTypeIf5AppendixParabola}
	\end{figure}
	Fig.~\ref{fig:SymTypeIf5AppendixEllipse} shows the transition from ellipse to ellipse and from ellipse to parabola for Symmetric Type I configurations.
	Fig.~\ref{fig:SymTypeIf5AppendixParabola} shows the transition from parabola to ellipse, parabola to parabola for Symmetric Type I configurations.
	The strategies for hyperbola are similar to the strategies for parabola, since the faulty robots in a convex shape can only occupy one side of the hyperbola. Thus the target pattern becomes a curve between the intersection points of latus rectum with the hyperbola similar to the parabola. We omit some figures corresponding to hyperbola.
\end{document}